\renewcommand\nomgroup[1]{%
  \item[\bfseries
  \ifstrequal{#1}{A}{Indices/Sets}{%
  \ifstrequal{#1}{B}{Parameters and constants}{%
  \ifstrequal{#1}{C}{Variables}{%
  \ifstrequal{#1}{D}{Acronyms}}}}%
]}
\newtheorem{theorem}{Theorem}
\newcommand{\myhyperref}[2]{\hyperref[#1]{#2 \ref*{#1}}}
\begin{document}

\title{\huge Provably Quantum-Secure Microgrids through Enhanced Quantum Distributed Control}

\author{Pouya Babahajiani,~\IEEEmembership{Senior Member,~IEEE,} P. Zhang, Ji Liu,~\IEEEmembership{Member,~IEEE,} and \par Tzu-Chieh Wei  
\thanks{This work was supported in part by the US Department of Energy's Office of Electricity under Agreement  No. 37533, in part by the US National Science Foundation under Grant Nos. ECCS-2018492, OIA-2134840 and OIA-2040599, and in part by Stony Brook University's Office of the Vice President for Research through a Quantum Information Science and Technology Seed Grant. T.-C. Wei was supported by the National Science Foundation under Grant
No. PHY-1915165.
\par
 P. Babahajiani, P. Zhang and J. Liu are with the Department of Electrical and Computer Engineering, Stony Brook University, NY 11794, USA (e-mails: pouya.babahajiani, p.zhang, ji.liu@stonybrook.edu).
 
 T.-C. Wei is with C.N. Yang Institute for Theoretical Physics, Stony Brook University, NY 11794-3800, USA (e-mail: tzu-chieh.wei@stonybrook.edu).

 }}

\maketitle

\begin{abstract}
Distributed control of multi-inverter microgrids has attracted considerable  attention  as  it  can  achieve  the  combined  goals of  flexible  plug-and-play  architecture  guaranteeing  frequency and  voltage  regulation  while  preserving power sharing among nonidentical distributed energy resources (DERs). However, it turns out that cybersecurity has emerged as a serious concern in distributed control schemes. Inspired by quantum communication developments and their security advantages, this paper devises a scalable quantum distributed controller that can guarantee synchronization, and power sharing among DERs. The key innovation lies in the fact that the new quantum distributed scheme allows for exchanging secret information directly through quantum channels among the participating DERs, making microgrids inherently cybersecure. Case studies on two ac and dc microgrids verify the efficacy of the new quantum distributed control strategy.

\end{abstract}
\hfill

\vspace{-1pt}
\begin{IEEEkeywords}
Quantum-secure microgrid control, distributed frequency regulation, distributed voltage regulation.

\end{IEEEkeywords}
\IEEEpeerreviewmaketitle
\normalsize

\lettrine{M}{icrogrids} have emerged as a promising new paradigm of electricity resiliency which deliver a growing share of the energy~\cite{morstyn2018using} and facilitate the penetration of the renewable resources into grid utility. In this spirit, in order to match up with their main characteristics including flexibility and scalability, microgrids are becoming ever more reliant on distributed control frameworks~\cite{9352975, 9462515}, and hence they have become cyber-physical systems that requires complicated network technologies to handle massive utilization of  communication  and  computation  devices~\cite{9026756}.

 On the other hand, while distributed control strategies can enhance microgrids resilience, they may cause cybersecurity challenges since they can be vulnerable to cyber attacks on communication links. Malicious  signals from third party agents can drive the microgrid toward inconsistent performance and instability of the whole system \cite{9158562}.  

Finding solutions to encounter cybersecurity issues in microgrids with distributed control strategies is an ongoing research \cite{9462515, chen2020fdi, sahoo2020resilient}. However, the existing solutions may become insecure due to the rapid development of supercomputers and the emergence of quantum computers ~\cite{wright2019benchmarking, qi2019implementation} and so they can make traditional/classical methods obsolete. Utilizing principles of quantum mechanics, quantum communication offers provable security of communication and is a promising solution to counter such threats \cite{qi2019implementation}.

Quantum physics principles give rise to novel capability unachievable with classical transmission media. In these schemes, information is encoded in the particle’s quantum state, which cannot be copied, and any attempt to do so can be detected. Therefore, the critical aspect is unconditional information security which is impossible with classical information processing~\cite{awschalom2021development}. The 
key benefit of using quantum-secured information is that the lifetime of the security is “infinite”, i.e., it will be secure against any future advance in computation capability~\cite{awschalom2021development}.

Therefore, quantum communication~\cite{Popkin1026, yu2020entanglement, castelvecchi2021quantum} has proposed a revolutionary step in secure communication due to its high security of the quantum information, and many models including quantum key distribution (QKD)~\cite{chen2021integrated}, quantum teleportation~\cite{luo2019quantum}, discrete-variable quantum secure direct communication~\cite{qi202115, hu2016experimental} and continuous variable quantum secure direct communication~\cite{pirandola2008continuous} have been developed. Based on QKD technology, many different types of quantum communication networks have been proposed \cite{joshi2020trusted, chen2021integrated}. However, these communication networks based on QKD technology only transmit the key, but do not directly transmit information. On the other hand, quantum secure direct communication is a kind of information carrier with quantum state in communication. In this method, secret information is directly transmitted over a secure quantum channel and inspite of QKD schemes, they do not require key distribution and key storage \cite{qi202115}.

Recent development in quantum physics suggests the possibility of investigating consensus problems for networks of quantum nodes~\cite{6849451, mazzarella2015extending, 7109119}. In~\cite{7109119}, it is shown that consensus can be obtained for a network of quantum nodes through establishing interactions among them by means of swapping operators, so that the whole network is described by a Lindblad master equation with the Lindblad terms generated by the swapping operators. Lindblad equation~\cite{BreuerH, lindblad1976} is utilized to describe the state evolution of quantum systems which have interactions with environment. Another approach is exploiting the gossip-type interaction between neighboring quantum nodes~\cite{6849451}. 

The existing approaches, however, are not utilizable for the problem of distributed control in microgrids for two main reasons. First, as we will elaborate more in the following sections, the distributed control problem in microgrids is a distributed tracking problem within which, both possibly time-varying targets and coupling mechanisms exist. Although in~\cite{7109119}, the swapping operator is exploited to build an underlying interaction graph and therefore the coupling mechanism for the quantum network, for the zero Hamiltonian case, the network's final state is the average of the initial states, and under the non-zero Hamiltonian, each qubit tends to the same trajectory related to the network Hamiltonian and initial states of all the nodes.  The latter mostly is difficult to derive the explicit trajectory of each qubit~\cite{7109119}. Second, these schemes are valid as long as the corresponding quantum system is not measured, which is necessary as at each time step, the information needs to be extracted (or encoded) from (or into) the quantum nodes, which makes these schemes impractical for realistic distributed control of microgrids.

Inspired by 
the existing cybersecurity challenges in microgrids with classical distributed control frameworks, in a recent work~\cite{9850415}, we devised a quantum distributed control for ac and dc microgrids within which, the information carrier is quantum states and the transmission media is a quantum channel, i.e., information is encoded into quantum states which are directly sent over quantum channels among participating DERs, while the control objectives including power/current sharing and frequency/voltage regulation are guaranteed. In this method, there is an assumption that at each time-step, qubits are (re-)initialized on the first quarter of the equator of the Bloch sphere and therefore the evolution of the qubits at each node on the equator is translated into control signal.

In this paper, we demonstrate, however, that with this assumption, the synchronization might not be achieved if quantum states become mixed over the time steps. Furthermore, it might be possible for a third party agent to find out that information is being encoded into the phase angle of the exchanged qubits. We aim to show that by some modifications and relaxing this assumption, 
the problem of mixed states can be tackled and also the security of the algorithm can be significantly enhanced such that even if a third party agent can measure the exchanged qubits, the measurement outcomes would be some random values which do not reveal information to the eavesdropper. Consequently, adversaries can not access nor falsify the data exchanged among DERs. The key contribution is therefore a provably quantum-secure distributed control that enables unprecedentedly resilient and cybersecure microgrids.

The rest of the paper is organized as follows. 
In Section II we formulate the problem by reviewing our recently proposed quantum distributed control in~\cite{9850415} and then pointing out the issues. Section III presents the developed quantum-secure distributed controller (QSDC) together with a numerical example. Section IV is devoted to explain the devised QSDC for ac and dc microgrids. Simulation results are also provided. Section~V concludes the paper.

\section{Problem Formulation}

To formulate our problem, we present an overview of the developed quantum distributed controller proposed in~\cite{9850415} followed by a discussion on the issues that can impair the synchronization of the network under this scheme. Fundamental materials from graph theory \cite{Godsil} and quantum systems \cite{nielsen_chuang_2010} used throughout the paper can be found in Appendix A.

\subsection{Quantum Distributed Control}

Distributed control problems of microgrids can be described as a network of differential equations over a simple, connected graph $G = (V,E)$ whose node set $V=\{ v_1, v_2, \ldots, v_n\}$ represents $n$ DERs and edge set $E$ depicts allowable communication among the DERs. These DERs evolve through interactions, according to certain networking scheme and dynamics.
\begin{figure}
	\centering
	\includegraphics[trim=15 15 20 23,clip, width=0.4\textwidth, height=0.28\textwidth]{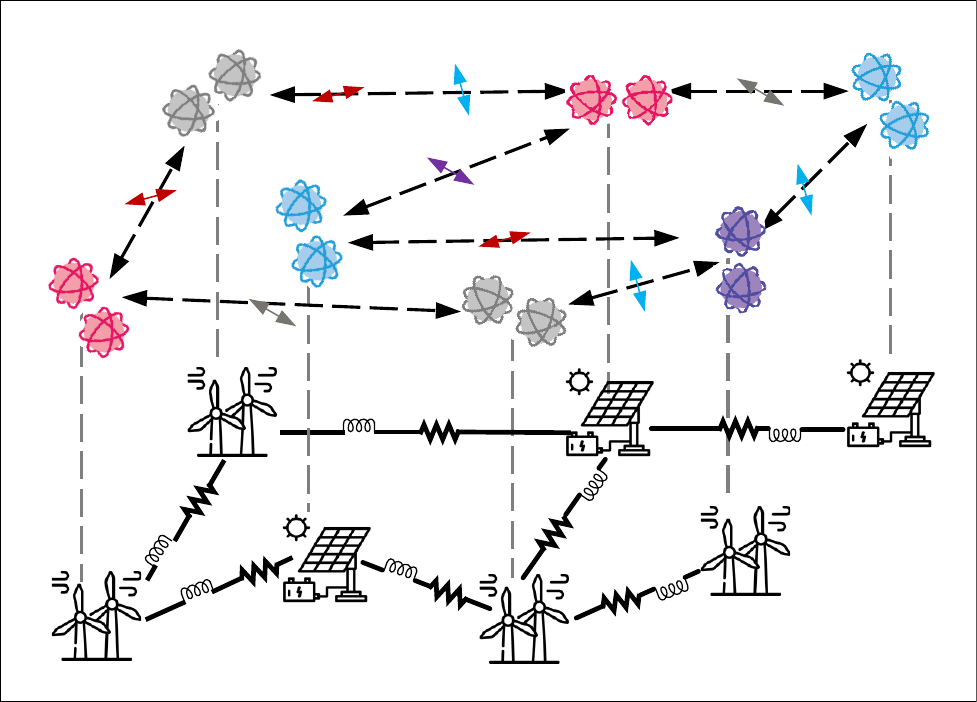}
	\caption{\small Coupling of the physical microgrid to the network of quantum controllers - Information is exchanged through quantum communication.}
	\label{fig4}\vspace{-10px}
\end{figure}

As a representative example, the goal of the distributed frequency control problem in ac microgrids is to regulate the network frequency to a rated value and guarantee active power sharing among the DERs, whose dynamics can be described by
\begin{equation}
\label{eq000}
    \begin{aligned}
    &\omega_i=\omega^*-n_iP_i+\Phi_i, \\
    &\dot\Phi_i=f(\Phi_i,\; P_i, \; \Phi_j,\;j\in N_i),
    \end{aligned}
\end{equation}
where $\omega_i$ represents the frequency at DER$_i$, $\omega^*$ is a nominal network frequency, $P_i$ is the active power injection at DER$_i$, $n_i$ is the droop gain, and the dynamics of $\Phi_i$ represents the secondary control variable, which is a function ($f$) of its current value, $P_i$,  and its neighbors' values $\Phi_j$'s which eventually evolves toward an (weighted) average of its neighbours values $\Phi_j$ such that all control variables $\Phi_i$ converge to the common value $\Phi_i=\Phi_j=n_iP_i$. 

By the same token, the problem of distributed voltage control and power/current sharing in DC microgrids, can be formulated similar to that of Eq.~(\ref{eq000}), as Eq.~(\ref{590000}) which will be discussed more in Section IV:
\begin{equation}
    \label{590000}
    \begin{aligned}
    &V_i^{ref}=V^*-m_iI_i+\Phi_i,\\
    &\dot\Phi_i=f(\Phi_i,\; I_i, \; \Phi_j,\;j\in N_i),
    \end{aligned}
\end{equation}
where $V_i^{ref}$ is the generated voltage reference, $V^*$ is the nominal dc voltage, $m_i$ is the current droop gain, and $I_i$ is the output current of DER$_i$. 

In~\cite{9850415}, the goal is to construct a network of differential equations, such as those  in Eqs.~(\ref{eq000}) and (\ref{590000}), to control a network of DERs as shown in Fig.~\ref{fig4}. There, in contrast to the classical synchronization, quantum bits are what is exchanged among the nodes and the framework is formulated as follows.

\noindent
$\bullet$ In this framework, each microgrid is equipped with or connected to a quantum computing (QC) device, which prepares a quantum state and then seeks a consensus among all the QCs in a distributed manner. The state of each quantum device can be described by a positive Hermitian density matrix $\rho$. Since synchronization requires interaction among all quantum devices, it is assumed that each device has access to the (quantum) information of its neighbors. Let $\ket{\psi}=\ket{q_1q_2 \cdots q_n}$ be the state of the whole quantum network and $\rho=\ket{\psi} \bra{\psi}$. The  following Lindblad master equation was introduced to construct  the  network  of  differential  equations (For more information on Markovian master equations in Lindblad form, which is a suitable way to describe the dynamics of a quantum system with interaction with environment, see \cite{wiseman2009quantum}): \vspace{-5pt}
\begin{equation}
\label{eq500}
    \begin{aligned}
    \dot \rho (t)&=\sum_{i=1}^n \Big(C_i^{}\rho C_i^{\dagger} - \frac{1}{2}\{ C_i^{\dagger}C_i^{}, \rho \} \Big)\\
    &+ \sum_{\{ i,j \}\in E} \Big(C_{i,j}^{}\rho C_{i,j}^{\dagger} - \frac{1}{2}\{ C_{i,j}^{\dagger}C_{i,j}^{}, \rho \} \Big).
    \end{aligned}
\end{equation}
where $C_i$ and $C_{i,j}$ are unitary jump operators.

In this scheme, the state of each quantum node at each time step is updated as follows: 
\begin{equation}
\label{eq112}
\begin{aligned}
    \ket{q_i(t)}=\cos{\frac{\pi}{4}}\ket{0} + e^{\imath\phi_i(t)}\sin{\frac{\pi}{4}}\ket{1} , \;\; t\in\{0,1,2,\ldots\}, 
    \end{aligned}
\end{equation}
which is the general state in polar coordinates set on the xy-plane, where $\phi_i(0)\in(0,\pi/2)$ and each $\phi_i(t)$, $t\ge 1$, is inferred from the averaged measurement outcome at node $i$.

In Eq.~(\ref{eq500}), $C_{i,j}$ is the swap operator that specifies the external interaction between quantum computing devices $i$ and $j$ such that
\begin{equation}
    \label{eq222}
    \begin{aligned}
    C_{i,j}(\ket{q_1} &\otimes ... \otimes \ket{q_i}\otimes ... \otimes\ket{q_j} \otimes ... \otimes\ket{q_n})\\
    & =\ket{q_1} \otimes ... \otimes \ket{q_j}\otimes ... \otimes\ket{q_i} \otimes ... \otimes\ket{q_n},
    \end{aligned}
\end{equation}
and the jump operator, $C_{i}$, is defined as
\begin{equation}
\label{eq7}
\begin{aligned}
    C_{i}=I^{\otimes (i-1)} \otimes R_z(\phi) \otimes I^{\otimes (n-i)},
    \end{aligned}
\end{equation}
with $R_z(\phi)$ being the single-qubit rotation-Z operator by an angle $\phi$ radians around the Z-axis. By definition, the operator $C_i$ acts only on $\ket{q_i}$ without changing the states of other qubits. As can be seen, the jump operators $C_i$ are state dependent and are updated based on the target values $\phi_{i,t}$ and the measured $\phi_i(t)$. Furthermore, as mentioned earlier, at the beginning of each time step, all the qubits are re-initialized as in (\ref{eq112}) based on the measurement outcome of the previous step. Therefore, at each time step, the master equation components are updated based on the target values and the obtained measurement signals. Thus, the density matrix at time $t+dt$ can be decomposed into $\rho(t+dt)=\rho(t)+d\rho_t$, where $d\rho(t)$ is defined in (\ref{eq500}).

Next, to obtain 
$\phi_i$, the following observables are utilized:  
\begin{align}
    &A_{1,i}= I^{\otimes (i-1)} \otimes \sigma_x  \otimes I^{\otimes (n-i)},\\ 
    &A_{2,i}= I^{\otimes (i-1)} \otimes \sigma_y  \otimes I^{\otimes (n-i)} \label{26b}.
\end{align}
The operator $I^{\otimes (i-1)} \otimes \sigma_{x/y}  \otimes I^{\otimes (n-i)}$ acts only on $\ket{q_i}$ where node-wise means, having $\sigma_x$ and $\sigma_y$ which are Pauli matrices as observers at each node, \vspace{-3pt}
\begin{equation}
\label{eq15}
    \begin{aligned}
    \sigma_x=\begin{pmatrix}
    0 & 1\\
    1 & 0\end{pmatrix}, \;\;\;\;\; \sigma_y=\begin{pmatrix}0 & -\imath\\
    \imath & 0\end{pmatrix}.
    \end{aligned}
\end{equation}

The expectation value of an observable $A$ in a state, represented by a density matrix $\rho$, is given by $\langle A \rangle = \rm tr(\rho A)$~\cite{preskill}. For a general one qubit state $\rho_i$, ${\rm tr}(\rho_i \sigma_x) = r_i \sin\theta_i \cos\phi_i$, ${\rm tr}(\rho_i \sigma_y) = r_i \sin\theta_i \sin\phi_i$ and ${\rm tr}(\rho_i \sigma_z) = r_i \cos\theta_i$ where $r_i$ is the size of the state vector $i$. Generally, Lindblad equation results in states becoming more mixed; however, we only let the system evolve in a short time and re-initialize the system in a product of pure qubit states. Therefore, we can consider $r=1$ and $\theta=\pi/2$ and hence \vspace{-3pt}
\begin{equation}
\label{260aa}
    \begin{aligned}
    &\rm tr(\rho\sigma_x) =\cos{\phi_i}, \;\;\;\;\;\rm tr(\rho\sigma_y) =\sin{\phi_i}.
    \end{aligned}
\end{equation}
which are equivalent to $\rm tr(\rho A_{1,i})=\cos{\phi_i}$ and  $\rm tr(\rho A_{2,i})=\sin{\phi_i}$, respectively. Note that $\frac{d}{dt} \langle A \rangle = \frac{d}{dt} \rm tr(\rho A) = \rm tr(\dot \rho A)$. Hence, it can be readily obtained that, $\frac{d}{dt}\arctan (\frac{\rm tr(\rho A_{2,i})}{\rm tr(\rho A_{1,i})})$ gives the dynamic of $\phi_i$, or the synchronization rule, as follows:\vspace{-5pt}
 \begin{equation}
\label{eq35000}
    \begin{aligned}
    \dot \phi_i=\sin{(\phi_{t,i}-\phi_i)}+\sum_{j=1}^n a_{i,j}\sin{(\phi_j-\phi_i)}. 
    \end{aligned}
\end{equation}  
where $a_{i,j}=1$ if $C_{ij}\neq \mathbf{0}$ and $a_{i,j}=0$ otherwise.
\hfill
$\blacksquare$

It is worth noting that both $\rm tr(\rho A_{1,i})$ and $\rm tr(\rho A_{2,i})$ are used in (\ref{eq35}) to find the trajectory that $\phi_i$ traverses along the time; however, either $\arccos{(\rm tr(\rho\sigma_x))}$ or $\arcsin{(\rm tr(\rho\sigma_y))}$ gives $\phi_i$.

As mentioned, at each time-step, qubits are (re-)initialized on the equator of the Bloch sphere and hence, remain on the equator along the process. Fig.~(\ref{fig33}) shows a numerical example of two qubits (utilizing 
open source software QuTiP~\cite{qutip}) with the initial states $\ket{q_1}=\frac{1}{\sqrt{2}}[1, \; 1]^T$ (node 1) and $\ket{q_2}=\frac{1}{\sqrt{2}}[1, \; \imath]^T$ (node 2) that get synchronized to the target $\phi^*=\pi/6$ while they remain on the equator along the time.

\begin{figure}
	\centering
	\includegraphics[trim=10 3 10 20,clip, width=0.5\textwidth, height=0.245\textwidth]{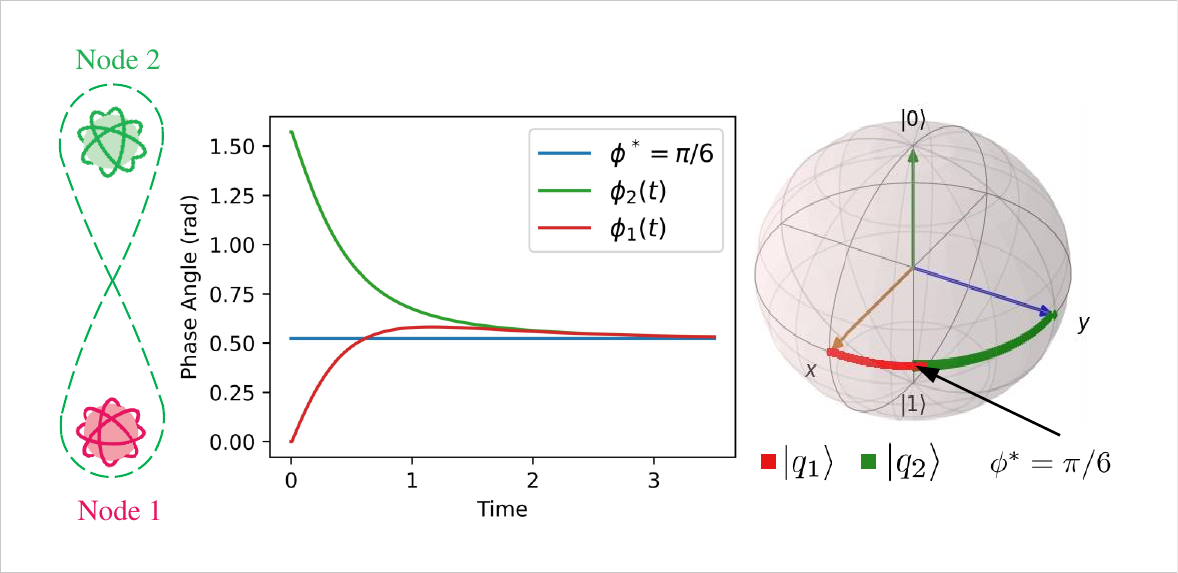} \vspace{-20pt}
	\caption{\small Exponential synchronization of the phase angles to the target phase $\phi^*=\pi/6$ - (Left side) Connected communication formed by the swap operator - (Right side) Bloch sphere representation shows the trajectories traversed by the qubits along the time.}
	\label{fig33}\vspace{-10px}
\end{figure}

\vspace{-10pt}

\subsection{The Problem}

There are two issues with the framework just described. First, in case qubits become mixed, neither $\arccos{(\rm tr(\rho\sigma_x))}$ nor $\arcsin{(\rm tr(\rho\sigma_y))}$ gives $\phi_i$ and synchronization would not be attained even if $\theta_i$ remains $\pi/2$ since, $\arccos{(\rm tr(\rho\sigma_x))}=\arccos{(r_i\cos{\phi_i})}\neq\phi_i$ and Eq.~(\ref{eq35000}) is no longer valid. For instance, Fig. (\ref{fig033}) shows how for the two qubits example in Fig. (\ref{fig33}), synchronization would be violated if $\ket{q_2}$ slightly becomes mixed at some random steps after the master equation evolution.
\begin{figure}
	\centering
	\includegraphics[trim=70 450 150 55,clip, width=0.4\textwidth, height=0.26\textwidth]{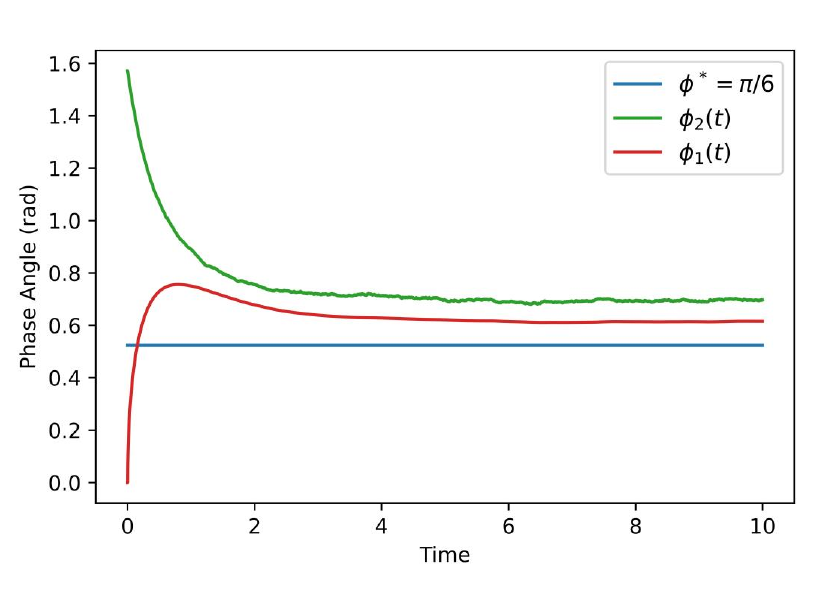} \vspace{-10pt}
	\caption{\small Violation of synchronization when $\ket{q_2}$ becomes mixed at some steps after the master equation evolution.}
	\label{fig033}\vspace{-10px}
\end{figure}
Second, if the eavesdropper does not know the basis, she can figure out the angle $\phi_i$ if she has a lot number of qubits and performs state tomography~\cite{torlai2018neural} to determine the basis and the angle $\phi_i$, allows her to perform false data injection and hence destabilize the microgrid.

In the next section, we demonstrate that by some modifications and relaxing the assumption of (re-)initializing qubits on the equator of Bloch sphere, the security of the algorithm can be significantly enhanced such that even if a third party agent can measure the exchanged qubits, the measurement outcomes would be some random values which do not reveal information to the eavesdropper. Furthermore, even if the states become mixed after the master equation evolution, or the (re-)initialization produces mixed states, the synchronization rule remains valid.

\vspace{-5pt}

\section{Quantum-Secure Distributed Control}

In our developed scheme, the state of each quantum node at each time step is prepared as follows
\begin{equation}
\label{eq1112}
\begin{aligned}
    \ket{q_i(t)}=\cos{\frac{\theta_i(t)}{2}}\ket{0} + e^{\imath\phi_i(t)}\sin{\frac{\theta_i(t)}{2}}\ket{1} , \;\; t\in\{0,1,2,\ldots\},
    \end{aligned}
\end{equation}
which is the general state in polar coordinates set on the surface of the Bloch sphere, where $0<\theta_i(t)<\pi$ is a random value and $\phi_i(0)\in(0,\pi/2)$.

\begin{theorem}
Consider the master equation (\ref{eq500}). Let each $\phi_i(t)$, $t\ge 1$, be the averaged measurement outcome  at node $i$, utilizing the observables (7) and (\ref{26b}), such that $\phi_i = \arctan (\frac{\rm tr(\rho A_{2,i})}{\rm tr(\rho A_{1,i})})$. Then, we have the following:
\begin{enumerate}
    \item The dynamic of $\phi_i$ can be written as \vspace{-10pt}
    \begin{equation}
    \label{260bbb}
    \begin{aligned}
    \dot \phi_i=\sin{(\phi_{t,i}-\phi_i)}+\sum_{j=1}^n a_{i,j}\frac{r_j\sin{\theta_j}}{r_i\sin{\theta_i}}\sin{(\phi_j-\phi_i)};
    \end{aligned}
    \end{equation}
    
    \item The synchronization rule in \textup{1)} remains valid even if the (re-)initialization produces mixed states, or states become mixed after master equation evolution;
    
    \item The random angle $\theta_i$ randomizes the state and hence on average (over $\theta_i$) the state appears to be random to a third party who measures the exchanged qubits among the quantum nodes.
\end{enumerate}

\end{theorem}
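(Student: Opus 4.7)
The plan is to handle all three claims via the general-state Bloch-vector parameterization $\rho_i=\frac{1}{2}(I+\vec r_i\cdot\vec\sigma)$ with $\vec r_i=r_i(\sin\theta_i\cos\phi_i,\sin\theta_i\sin\phi_i,\cos\theta_i)$, which yields $\mathrm{tr}(\rho_i\sigma_x)=r_i\sin\theta_i\cos\phi_i$ and $\mathrm{tr}(\rho_i\sigma_y)=r_i\sin\theta_i\sin\phi_i$ for every single-qubit state, pure or mixed. Because these identities hold for any $r_i\in[0,1]$ and any $\theta_i\in(0,\pi)$, the formula $\phi_i=\arctan(\mathrm{tr}(\rho A_{2,i})/\mathrm{tr}(\rho A_{1,i}))$ cancels the common $r_i\sin\theta_i$ factor and returns the true phase exactly. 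This observation will already yield claim (2) once the dynamics in claim (1) is derived, since no step in that derivation will ever invoke purity or $\theta_i=\pi/2$.

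To prove claim (1), I will apply the chain rule to $\phi_i=\arctan(\mathrm{tr}(\rho A_{2,i})/\mathrm{tr}(\rho A_{1,i}))$, giving
\[
\dot\phi_i=\frac{\mathrm{tr}(\rho A_{1,i})\,\mathrm{tr}(\dot\rho A_{2,i})-\mathrm{tr}(\rho A_{2,i})\,\mathrm{tr}(\dot\rho A_{1,i})}{(r_i\sin\theta_i)^2},
\]
and then substitute the master equation (\ref{eq500}) and evaluate its two families of Lindblad terms separately. The single-qubit jump operator $C_i$, combined with the rotation identities $R_z^{\dagger}(\phi_{t,i})\sigma_x R_z(\phi_{t,i})=\cos(\phi_{t,i})\sigma_x+\sin(\phi_{t,i})\sigma_y$ and its $\sigma_y$ analogue, will contribute the target-tracking term $(r_i\sin\theta_i)^2\sin(\phi_{t,i}-\phi_i)$ to the numerator. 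For each swap term, unitarity gives $\frac{1}{2}\{C_{i,j}^\dagger C_{i,j},\rho\}=\rho$, and the key identity $C_{i,j}\sigma_\alpha^{(i)}C_{i,j}=\sigma_\alpha^{(j)}$ converts $\mathrm{tr}(C_{i,j}\rho C_{i,j}A_{k,i})$ into $\mathrm{tr}(\rho A_{k,j})$, so the coupling contribution to the numerator becomes $r_i\sin\theta_i\cdot r_j\sin\theta_j\sin(\phi_j-\phi_i)$. Dividing by $(r_i\sin\theta_i)^2$ then produces precisely the coupling gain $r_j\sin\theta_j/(r_i\sin\theta_i)$ in (\ref{260bbb}).

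For claim (3), I will average (\ref{eq1112}) over the prior on $\theta_i$ and show that the Bloch vector of $\bar\rho_i=\mathbb{E}_{\theta_i}[\rho_i]$ has its transverse component scaled by $\mathbb{E}[\sin\theta_i]$ and its longitudinal component by $\mathbb{E}[\cos\theta_i]$. Consequently every fixed projective measurement $\frac{1}{2}(I+\vec m\cdot\vec\sigma)$ returns outcome probability $\frac{1}{2}(1+\mathbb{E}[\sin\theta_i](m_x\cos\phi_i+m_y\sin\phi_i)+\mathbb{E}[\cos\theta_i]m_z)$, in which $\phi_i$ is entangled with an unknown nuisance parameter; because a fresh $\theta_i$ is drawn each time step, no tomographic averaging across copies recovers it either. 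The main obstacle will be the bookkeeping in claim (1): one must confirm that $C_{i,j}$, acting on a density matrix that need not factorize, genuinely induces the $r_j\sin\theta_j$ combinations on the tagged qubit, and that the anticommutator term cancels cleanly against the appropriate $A_{k,i}$ traces. Once the swap identity and the $R_z$ rotation formulas are in place, the remainder reduces to routine algebra.
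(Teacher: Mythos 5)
For parts (1) and (2) your route is essentially the paper's: the same Bloch-vector identities $\mathrm{tr}(\rho A_{1,i})=r_i\sin\theta_i\cos\phi_i$, $\mathrm{tr}(\rho A_{2,i})=r_i\sin\theta_i\sin\phi_i$, the same quotient-rule differentiation of $\arctan(\mathrm{tr}(\rho A_{2,i})/\mathrm{tr}(\rho A_{1,i}))$, and the same observation that nothing in the computation invokes purity or $\theta_i=\pi/2$, which is all the paper offers for claim (2). You are actually more explicit than the paper at the one place where it is thinnest: the paper simply writes down the expressions for $\mathrm{tr}(\dot\rho A_{1,i})$ and $\mathrm{tr}(\dot\rho A_{2,i})$ in Eq.~(\ref{eq30}) as ``considering (\ref{eq500}),'' whereas you spell out how the $R_z$ conjugation identities produce the pinning term and how the swap identity $C_{i,j}\sigma_\alpha^{(i)}C_{i,j}=\sigma_\alpha^{(j)}$ together with unitarity ($\tfrac12\{C_{i,j}^\dagger C_{i,j},\rho\}=\rho$) produces the cross terms $\mathrm{tr}(\rho A_{k,j})-\mathrm{tr}(\rho A_{k,i})$ even when $\rho$ does not factorize. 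That bookkeeping is exactly what is needed to justify (\ref{eq30}), so this part of your plan is sound and, if anything, fills a gap the paper leaves open.

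For claim (3), however, your own computation undercuts the conclusion you draw from it. You correctly note that averaging over $\theta_i\in(0,\pi)$ scales the transverse Bloch components by $\mathbb{E}[\sin\theta_i]$ and the longitudinal one by $\mathbb{E}[\cos\theta_i]$; but $\sin\theta_i>0$ on $(0,\pi)$, so $\mathbb{E}[\sin\theta_i]>0$ and the averaged state is \emph{not} maximally mixed (contrary to the paper's assertion that it is $I/2$) --- its transverse Bloch vector still points in the direction $\phi_i$. An eavesdropper with many copies can therefore estimate both $\langle\sigma_x\rangle$ and $\langle\sigma_y\rangle$ and take their ratio, in which the unknown nuisance factor $\mathbb{E}[\sin\theta_i]$ cancels, recovering $\tan\phi_i$ --- the very trick the legitimate nodes use. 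Your sentence ``no tomographic averaging across copies recovers it'' does not follow from the preceding formula; the nuisance parameter is multiplicative and common to both transverse components, so it is not ``entangled'' with $\phi_i$ in any way that obstructs estimation. The paper's own argument for claim (3) (and its single-circuit numerical demonstration, which only measures $\sigma_x$) has the same weakness, but since your write-up makes the averaged Bloch vector explicit, you should either restrict the security claim to adversaries who measure a single fixed basis, or supply an additional argument (e.g., about the number of intercepted copies per time step) before asserting that $\phi_i$ is unrecoverable.
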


\begin{proof}
As mentioned, for the observables $A_{1,i}$ and $A_{2,i}$, we have the following expectation values: \vspace{-6pt}
\begin{equation}
\label{260a}
    \begin{aligned}
    &\rm tr(\rho A_{1,i})=r_i\sin{\theta_i}\cos{\phi_i}, \\
    &\rm tr(\rho A_{2,i})=r_i\sin{\theta_i}\sin{\phi_i}.
    \end{aligned}
\end{equation}
Considering (\ref{eq500}) and that $\frac{d}{dt} \langle A \rangle = \rm tr(\dot \rho A)$, we can obtain the synchronization rule as follows:
\begin{equation}
\label{eq30}
    \begin{aligned}
    \rm tr(\dot \rho A_{1,i}) &=r_i\cos{\phi_{t,i}}\sin{\theta_i}-r_i\cos{\phi_i}\sin{\theta_i}+\\
    &\sum_{j=1}^n a_{i,j}(r_j\cos{\phi_j}\sin{\theta_j}-r_i\cos{\phi_i}\sin{\theta_i}),\\
    \rm tr(\dot \rho A_{2,i}) &=r_i\sin{\phi_{t,i}}\sin{\theta_i}-r_i\sin{\phi_i}\sin{\theta_i}+\\
    &\sum_{j=1}^n a_{i,j} (r_j\sin{\phi_j}\sin{\theta_j}-r_i\sin{\phi_i}\sin{\theta_i}),
    \end{aligned}
\end{equation}
where $a_{i,j}=1$ if $C_{ij}\neq \mathbf{0}$ and $a_{i,j}=0$ otherwise. Utilizing (\ref{260a}) and (\ref{eq30}), the dynamic of $\phi_i$ can be found as follows: \vspace{-10pt}

\begin{equation}
\label{eq35}
    \begin{aligned}
    \dot \phi_i &= \frac{d}{dt}\arctan (\frac{\rm tr(\rho A_{2,i})}{\rm tr(\rho A_{1,i})}) \\
    &= \frac{\rm tr(\dot \rho A_{2,i})tr(\rho A_{1,i})-tr(\dot \rho A_{1,i})tr(\rho A_{2,i})}{r_i^2\sin^2{\theta_i}\sin^2{\phi_i}+r_i^2\sin^2{\theta_i}\cos^2{\phi_i}}\\
    &=\frac{1}{r_i^2\sin^2{\theta_i}}\Bigg[\bigg( r_i\sin{\phi_{t,i}}\sin{\theta_i}-r_i\sin{\phi_i}\sin{\theta_i}\\
    &\;\;+\sum_{j=1}^n a_{i,j} (r_j\sin{\phi_j}\sin{\theta_j}-r_i\sin{\phi_i}\sin{\theta_i})\bigg)r_i\cos{\phi_i}\sin{\theta_i} \\
    &\;\;\;\;\;\;\;-\bigg(r_i\cos{\phi_{t,i}}\sin{\theta_i}-r_i\cos{\phi_i}\sin{\theta_i}+\\
    &\;\;\;\;\sum_{j=1}^n a_{i,j} (r_j\cos{\phi_j}\sin{\theta_j}-r_i\cos{\phi_i}\sin{\theta_i})\bigg)r_i\sin{\phi_i}\sin{\theta_i}\Bigg] \\
    &=\sin{\phi_{t,i}}\cos{\phi_i}-\cos{\phi_{t,i}}\sin{\phi_i}+\\
    &\;\;\;\;\;\;\;\;\;\;\; \sum_{j=1}^{n}a_{i,j}\frac{r_j\sin{\theta_j}}{r_i\sin{\theta_i}}(\sin{\phi_j\cos{\phi_i}}-\cos{\phi_j}\sin{\phi_i}) \\
    &=\sin{(\phi_{t,i}-\phi_i)}+\sum_{j=1}^n a_{i,j}\frac{r_j\sin{\theta_j}}{r_i\sin{\theta_i}}\sin{(\phi_j-\phi_i)}.
    \end{aligned}
\end{equation}

As stated, $0<\theta_i<\pi$ and hence, $(r_j\sin{\theta_j})/(r_i\sin{\theta_i})>0$ and so  $a_{i,j}\frac{r_j\sin{\theta_j}}{r_i\sin{\theta_i}}\geq0$. Therefore, (\ref{eq35}) can be rewritten as
\begin{equation}
\label{260b}
    \begin{aligned}
    \dot \phi_i=\sin{(\phi_{t,i}-\phi_i)}+\sum_{j=1}^n a_{i,j}^{'}\sin{(\phi_j-\phi_i),}
    \end{aligned}
\end{equation}
where $a^{'}_{i,j}=a_{i,j}\frac{r_j\sin{\theta_j}}{r_i\sin{\theta_i}}>0$ if $C_{ij}\neq \mathbf{0}$ and $a^{'}_{i,j}=0$ otherwise. In Appendix B, it is shown that how the pinning term $\sin{(\phi_{t,i}-\phi_i)}$ forces the phase $\phi_i$ to stick at the value $\phi_{t,i}$ and the coupling mechanism $\sum_{j=1}^n a_{i,j}^{'} \sin{(\phi_j-\phi_i)}$ helps to synchronize the entire system and all the nodes will synchronize to the pinner $\phi_{t,i}$ exponentially fast. 

The above analysis leads to the following critical property of the proposed quantum-secure distributed control:

\begin{enumerate}
    \item With this scheme, the additional random angle $\theta_i$ randomizes the state and hence on average (over $\theta_i$) the state appears to be random, i.e., maximally mixed $I/2$. Then, the eavesdropper cannot figure out the correct 0/1 basis and consequently, the information is being encoded into $\phi_i$.
    \item (\ref{eq35}) shows that the scheme works even if the (re-)~initialization produces mixed states, or states become mixed along the master equation evolution.
\end{enumerate}
This completes the proof. 
\end{proof}

Considering (\ref{260a}), even if $\phi_i$ does not change, changing $\theta_i$ impacts the expectation values $\rm tr(\rho A_{1,i})$ and $\rm tr(\rho A_{2,i})$. Therefore, as shown in (\ref{eq35}), both $\rm tr(\rho A_{1,i})$ and $\rm tr(\rho A_{2,i})$ are required to find the trajectory traversed by $\phi_i$ along the time. However, measuring a qubit results in demolishing that quantum state. Hence, to obtain the both expectations at each node, at each time step, two identical qubits will be prepared according to Eq.~(\ref{eq1112}) which will experience the same Master equation evolution. Afterwards, expectation value of $\sigma_x$ is measured for one of the qubits and expectation of $\sigma_y$ for its identical twin. 
 \begin{figure*}
	\centering
	\includegraphics[trim=3 3 3 3,clip, width=1\textwidth, height=0.49\textwidth]{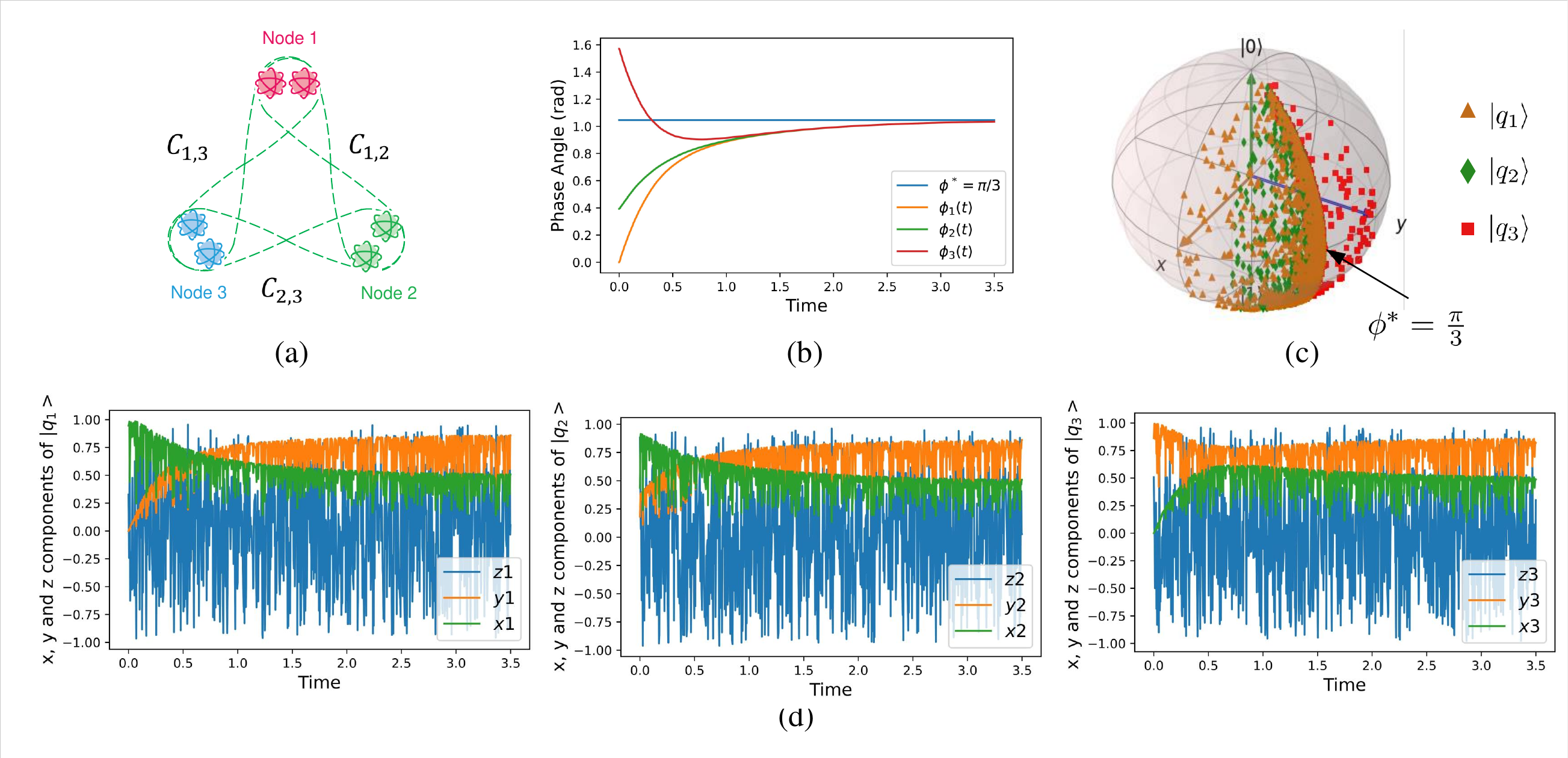}\vspace{-10pt}
	\caption{\small (a) Connected network of three quantum nodes. (b) Exponential synchronization of phase angles to $\phi^*=\pi/3$. (c) Bloch sphere representation of the states of the quantum nodes along the time. It shows how the additional random $\theta_i$ randomizes the states. (d) Measurement outcomes for x,y and z components are random values.}
	\label{fig3}
\end{figure*}
\begin{algorithm}
\SetAlgoLined
 \textbf{1} At each node, initialize a pair of identical qubits  as a point on the surface of the first quarter of the Bloch Sphere, i.e., $0<\phi_{i}(0)<\pi/2$ and $0<\theta_i(t)<\pi$, using Eq.~(\ref{eq112}) for each adjacent node  \\
 \textbf{2} Transmit quantum information throughout the network such that each quantum node receives a pair of identical qubits from each one of its adjacent nodes \\
 \textbf{3} At each node, update the rotation-Z ($R_z$) operator's argument based on the pinner ($\phi_{t,i}$) and the current value of the phase angle $\phi_i$\\
 \textbf{4} Evolve the master Eq.~(\ref{eq500}) for one time step $\delta t$ by means of the swapping and rotation-Z operators\\
 \textbf{5} Measure the expectation value of the $\sigma_x$ operator for one of the qubits and $\sigma_y$ for its identical twin. Repeating this multiple times and averaging gives the $r_i\sin{\theta_i}\cos{\phi_i}$ and $r_i\sin{\theta_i}\sin{\phi_i}$, respectively. On classical hardware at each node, compute $\arctan(\frac{\langle \sigma_y \rangle}{\langle \sigma_x \rangle})$ to obtain the phase angle $\phi_i$\\
 \textbf{6} Re-initialize the state of each quantum node according to Eq. (\ref{eq112}) and based on the measurement outcomes at step \textbf{5}\\
 \textbf{7} Go back to step \textbf{2}\\
 \caption{Quantum-Secured Distributed Control}
 \label{alg}
\end{algorithm}
The basic outline of the algorithm is summarized in \textbf{Algorithm~1}. 

{\bf Numerical Example.}
In this example, in addition to synchronization, we illustrate that, how choosing random variables for $\theta_i$ results in an unprecedentedly secured distributed control framework. We consider a network composed of three quantum nodes with the following initial states:
\begin{align*}
    &\ket{q_1}=\cos{\frac{1.96}{2}}\ket{0} + \sin{\frac{1.96}{2}}\ket{1},\\
    &\ket{q_2}=\cos{\frac{1.49}{2}}\ket{0} + e^{\frac{\pi}{8}\imath}\sin{\frac{1.49}{2}}\ket{1},\\
    &\ket{q_3}=\cos{\frac{2.07}{2}}\ket{0} + e^{\frac{\pi}{2}\imath}\sin{\frac{2.07}{2}}\ket{1}.
\end{align*}
Here, the target is $\phi^*=\frac{\pi}{3}$. The three qubits interact through swapping operators $C_{1,2}$, $C_{2,3}$ and $C_{1,3}$, forming a connected interaction graph as shown in Fig. \ref{fig3}a. We consider the state of the whole quantum network as $\rho=\ket{q_1q_2q_3}\bra{q_1q_2q_3}$.
As an illustrative example, the swapping operator $C_{1,2}$ would be as follows:
 \begin{equation}
\label{swap}
    \begin{aligned}
    C_{1,2}=\begin{pmatrix} 1 & 0& 0& 0\\ 0 & 0& 1& 0\\0 & 1& 0& 0\\0 & 0& 0& 1 \end{pmatrix} \otimes \begin{pmatrix} 1 & 0\\ 0 & 1 \end{pmatrix}
    \end{aligned}
\end{equation}
 such that,
  \begin{equation}
\label{swap2}
    \begin{aligned}
    C_{1,2} &\ket{q_{\textbf{1}}(t)q_{\textbf{2}}(t)q_3(t)}\bra{q_{\textbf{1}}(t)q_{\textbf{2}}(t)q_3(t)} C_{1,2}^{\dagger}\\
    &-\ket{q_{\textbf{2}}(t)q_{\textbf{1}}(t)q_3(t)}\bra{q_{\textbf{2}}(t)q_{\textbf{1}}(t)q_3(t)}=[\mathbf{0}]_{8\times8}
    \end{aligned}
\end{equation}

 The trajectories of $\phi_1$, $\phi_2$ and $\phi_3$, i.e. $\phi$ angles of $\ket{q_1}$, $\ket{q_2}$ and $\ket{q_3}$ respectively, are sketched in Fig. \ref{fig3}b utilizing the Python-based open source software QuTiP~\cite{qutip}. As illustrated, all the three phase angles converge to $\phi^*$. Therefore, the final state of each quantum node is the state $[\cos{\frac{\theta_i(t)}{2}}, \; e^{\frac{\pi}{3}\imath}\sin{\frac{\theta_i(t)}{2}}]^T$ where, $0<\theta_i(t)<\pi$ is a random value.
 
 As depicted in Fig.~\ref{fig3}d, even if the eavesdropper is able to measure the x, y and z components of the exchanged qubits, the measurement outcomes would be some random values which don't reveal meaningful information to the eavesdropper.
 
To better illustrate that how randomizing $\theta$ increases the security of the scheme, a single qubit measurement experiment is provided in the following (Fig. \ref{figM1}). In this numerical example first, it is shown that how measuring the probabilities of the 0/1 basis of the qubit can give the $\phi$ angle, when qubit is initialized on the equator of the Bloch sphere and there are enough copies of it. Suppose the qubit is initialized as follows:
\begin{align*}
    &\ket{q}=\cos{\frac{\pi}{4}}\ket{0} + e^{\frac{\pi}{6}\imath}\sin{\frac{\pi}{4}}\ket{1}.
\end{align*}
It can be readily obtained that the X, Y and Z components of the qubit can be found through measuring the Z basis of the circuits 1, 2 and 3, respectively, where Z, H (Hadamard gate) and $S^{\dagger}$ are defined as follows:
\begin{equation*}
    \label{measure111}
    \begin{aligned}
    Z=\begin{pmatrix}
    1 & 0\\
    0 & -1\end{pmatrix}, \; H=\frac{1}{\sqrt{2}}\begin{pmatrix}
    1 & 1\\
    1 & -1\end{pmatrix}, \; S^{\dagger}=\begin{pmatrix}
    1 & 0\\
    0 & -\imath\end{pmatrix}.
    \end{aligned}
\end{equation*}
\begin{figure}
\centering
\includegraphics[trim=5 5 5 5,clip,width=0.51\textwidth, height=0.19\textwidth]{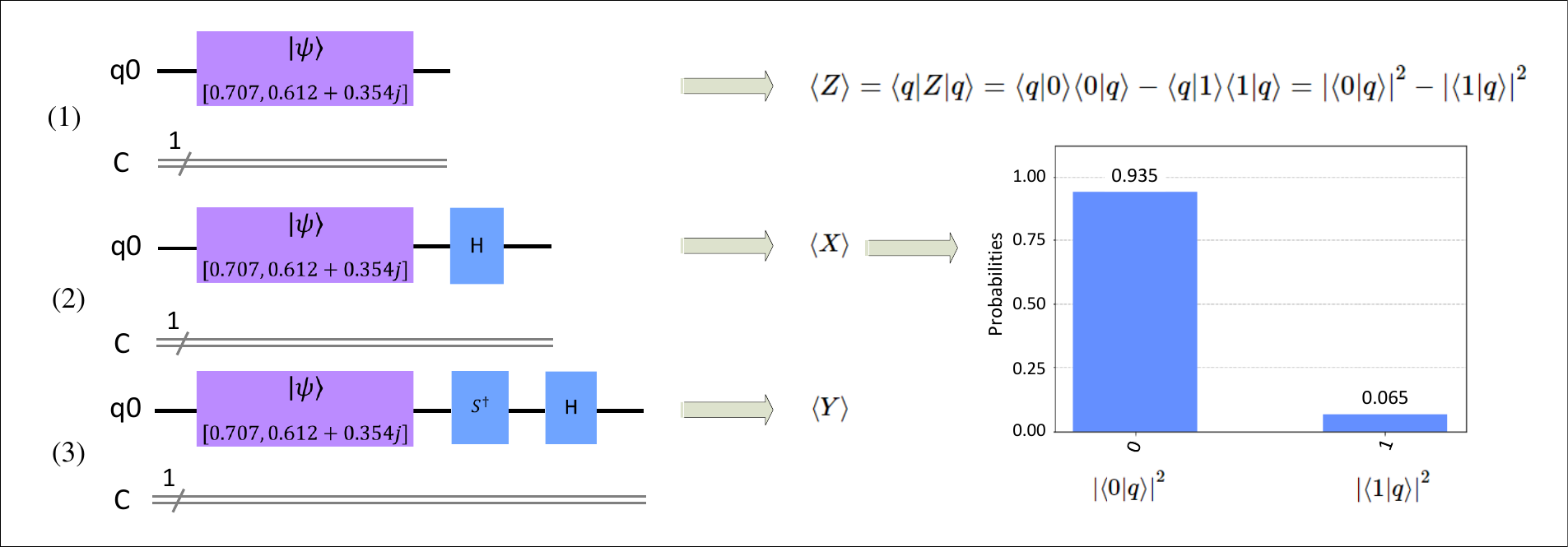}
\caption{\small Measuring the Z basis of the above circuits gives the X, Y and Z components of an arbitrary qubit. The histogram on the right shows the probability of finding the qubit in 0/1 basis with the circuit (2) which is the average over 2000 experiments. Subtracting these two probabilities gives the X component.}
\label{figM1}
\end{figure}
\begin{figure}
\centering
\includegraphics[trim=5 5 5 5,clip,width=0.28\textwidth, height=0.19\textwidth]{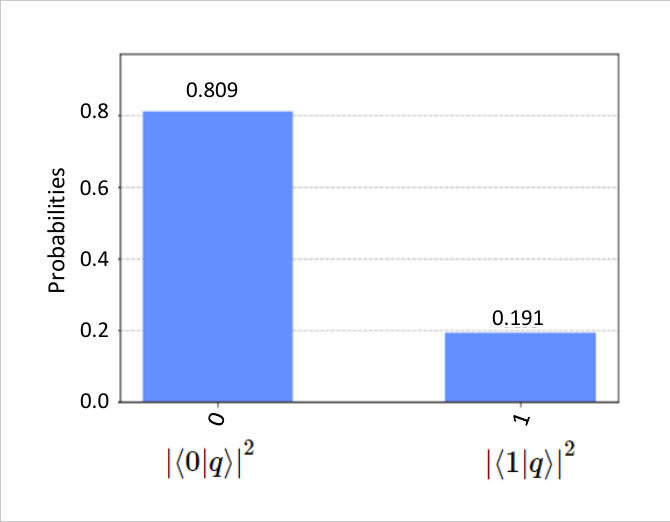}\vspace{-10pt}
\caption{\small Measurement result of circuit (2) with randomized $\theta$.}
\label{figM222}
\end{figure}
As can be seen, the $\phi$ angle of the qubit can be found through measuring the probabilities of finding the qubit in 0/1 basis:
\begin{equation*}
    \label{measure000}
    \begin{aligned}
    \cos^{-1}{(\abs{\bra{0}\ket{q}}^2-\abs{\bra{1}\ket{q}}^2)}=\cos^{-1}{(0.935-0.065)=\frac{\pi}{6}}.
    \end{aligned}
\end{equation*}
Hence, if an eavesdropper has access to enough copies of the quantum state, she can figure out 
$\phi$ 
and therefore, the exchanged secret information.
Now, we show how by 
using 
superposition 
of qubits, we can significantly enhance the security of the scheme. As mentioned, for a general single qubit, the expectation value of $\sigma_x$ (which can be obtained by measuring the Z basis of the circuit (2) and subtracting the probabilities) equals to $\langle \sigma_x \rangle=r\sin{\theta\cos{\phi}}$, and since the qubit is initialized as a pure state on the equator of the Bloch sphere, $r=1$ and $\theta=\pi/2$. But when the angle $\theta$ is randomized such that:
\begin{align*}
    \ket{q}=\cos{\frac{\theta}{4}}\ket{0} + e^{\frac{\pi}{6}\imath}\sin{\frac{\theta}{4}}\ket{1}, \theta \in (0, \pi)
\end{align*}
\begin{figure*}
\centering
\includegraphics[trim=5 5 5 5,clip,width=1\textwidth, height=0.37\textwidth]{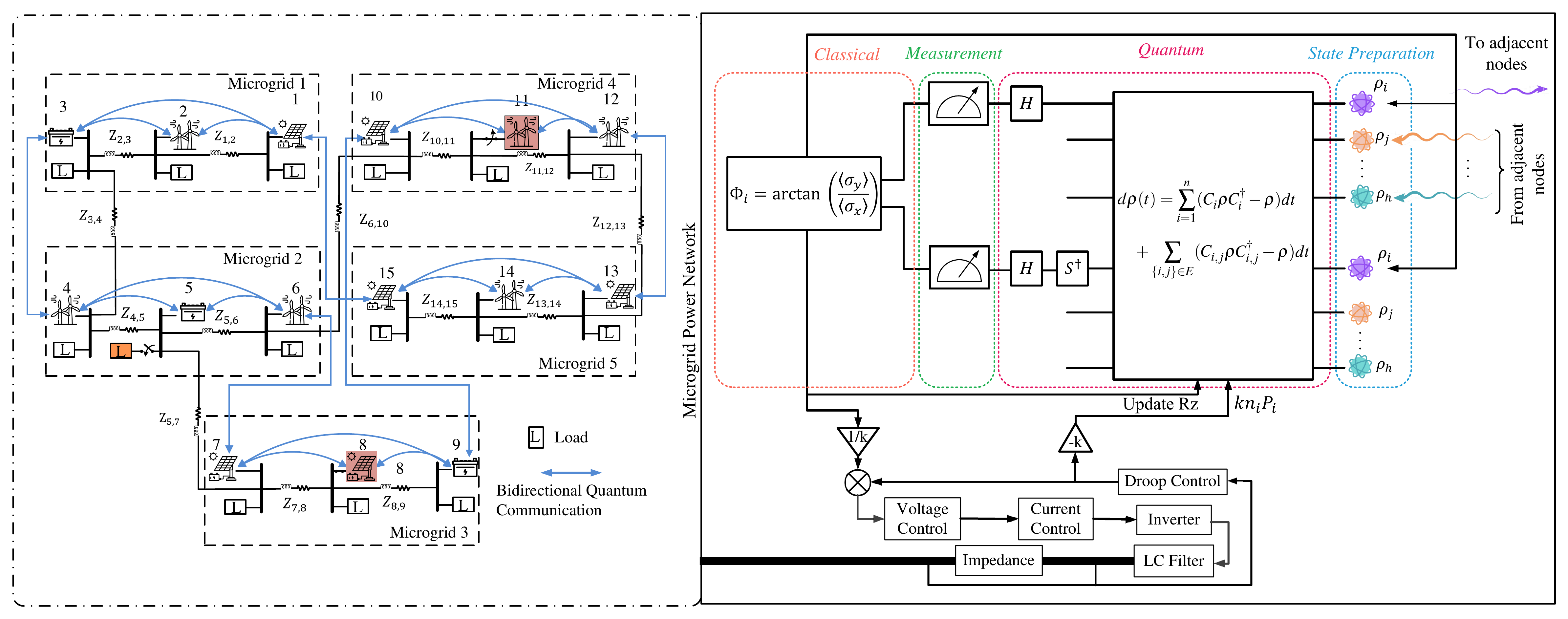}
\caption{\small The AC Networked microgrids case study is shown on the left. Blue bidirectional arrows represent the undirected quantum communications - The figure on the right demonstrates how QSDC is implemented as the secondary controller.}
\label{figM2}
\end{figure*}
$\langle\sigma_x\rangle\neq\cos{\phi}$ and hence, the probabilities of finding the qubit in 0/1 basis do not give information about the X component, and consequently the encoded information. Fig. (\ref{figM222}) shows the probabilities with the circuit (2) which is the average over 2000 shots when random $\theta$ is used at each shot. As it is evident, $\cos^{-1}{(0.809-0.191)}\neq \frac{\pi}{6}$, and from Eve's point of view, exchanged qubits are alternating between $\ket{0}$ and $\ket{1}$ basis with no meaningful pattern, which is also demonstrated in Fig. \ref{fig3}d.

\section{Quantum-Secured Distributed Controller for AC and DC Microgrids}

\subsection{AC Microgrids}

Droop control is a local DER control which is used to regulate the DER's frequency $\omega_i$ through local measurement of the active power injection at DER$_i$ as follows
\begin{equation}
    \label{eq390}
    \begin{aligned}
    &\omega_i=\omega^*-n_iP_i
    \end{aligned}
\end{equation} 
where $\omega^*$ is a nominal network frequency, $P_i$ is the measured active power injection at DER$_i$ and $n_i$ is the gain of the droop coefficient. The secondary control is then utilized to restore the operating frequency to the rated value $\omega^*$. Our developed quantum distributed controller at the secondary level is formulated as follows: 
\vspace{-3pt}
\begin{equation}
    \label{400a}
    \begin{aligned}
    &\omega_i=\omega^*-n_iP_i + \frac{\phi_i}{k}, \\
    &\dot \phi_i=\sin{(kn_iP_i-\phi_i)}+\sum_{j=1}^n a^{'}_{i,j}\sin{(\phi_j-\phi_i)},
    \end{aligned}
\end{equation}
\vspace{-3pt}
\begin{figure*}[h]
\centering \subfigure[]
{\includegraphics[trim=0 20 0 2, clip,width=0.49\textwidth,height=0.16\textwidth]{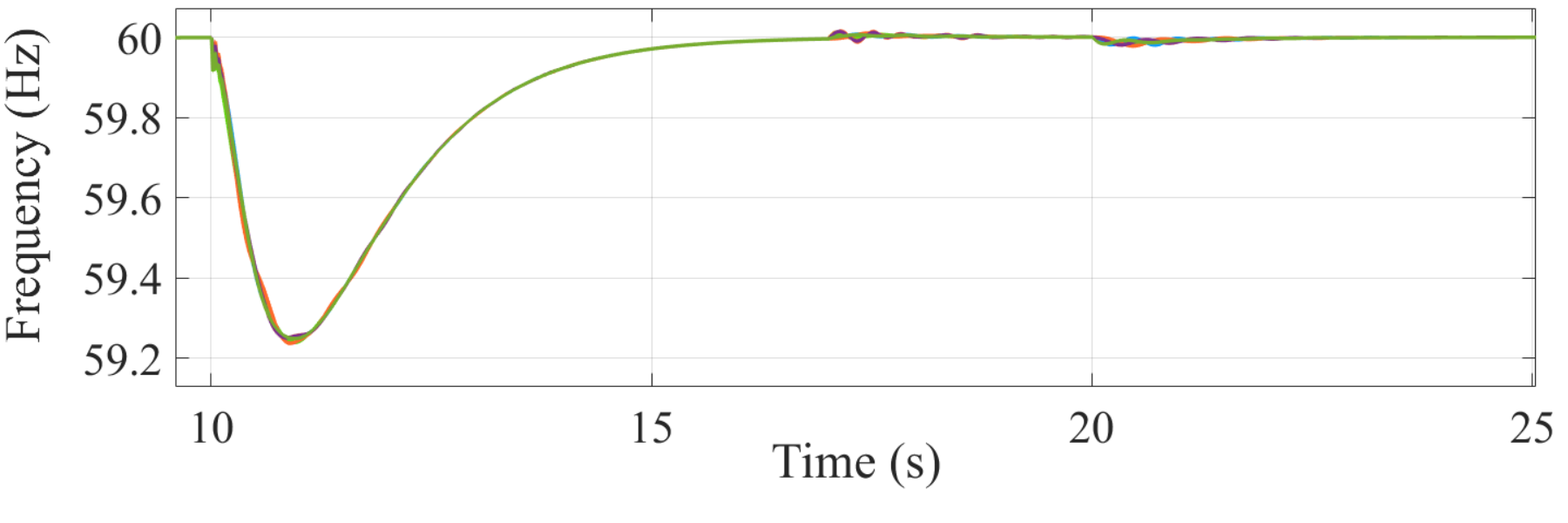}}
\centering \subfigure[]
{\includegraphics[trim=20 125 0 155, clip,width=0.49\textwidth,height=0.16\textwidth]{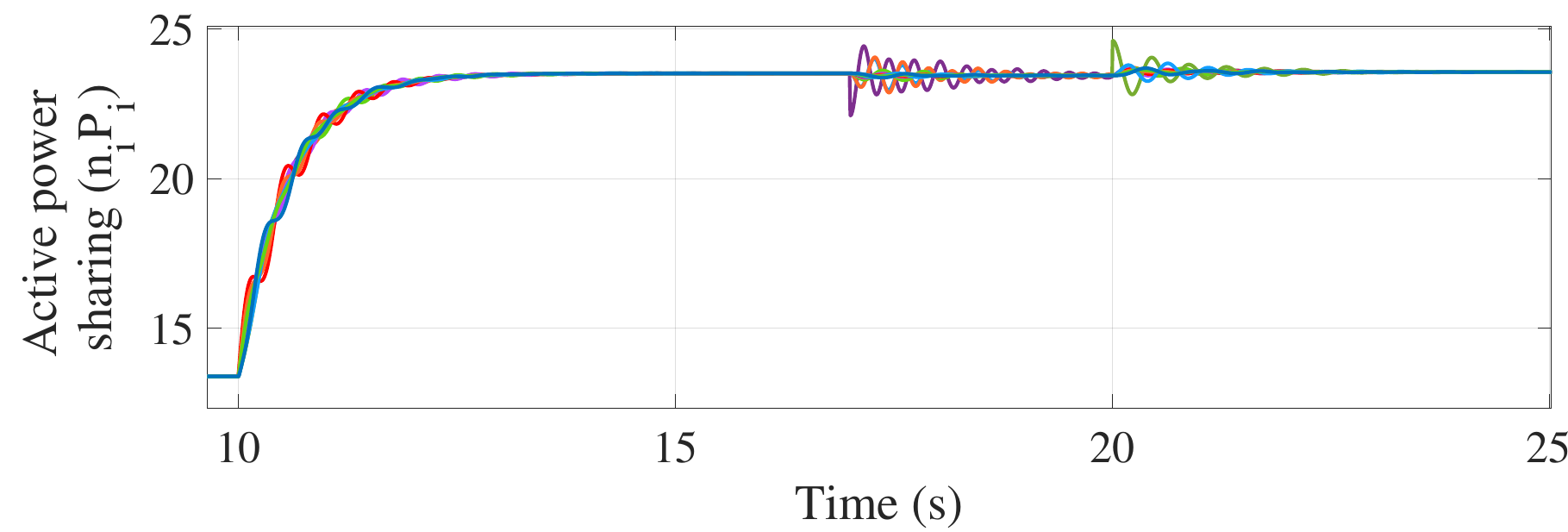}}
\centering \subfigure[]
{\includegraphics[trim=35 240 35 265, clip,width=0.48\textwidth,height=0.165\textwidth]{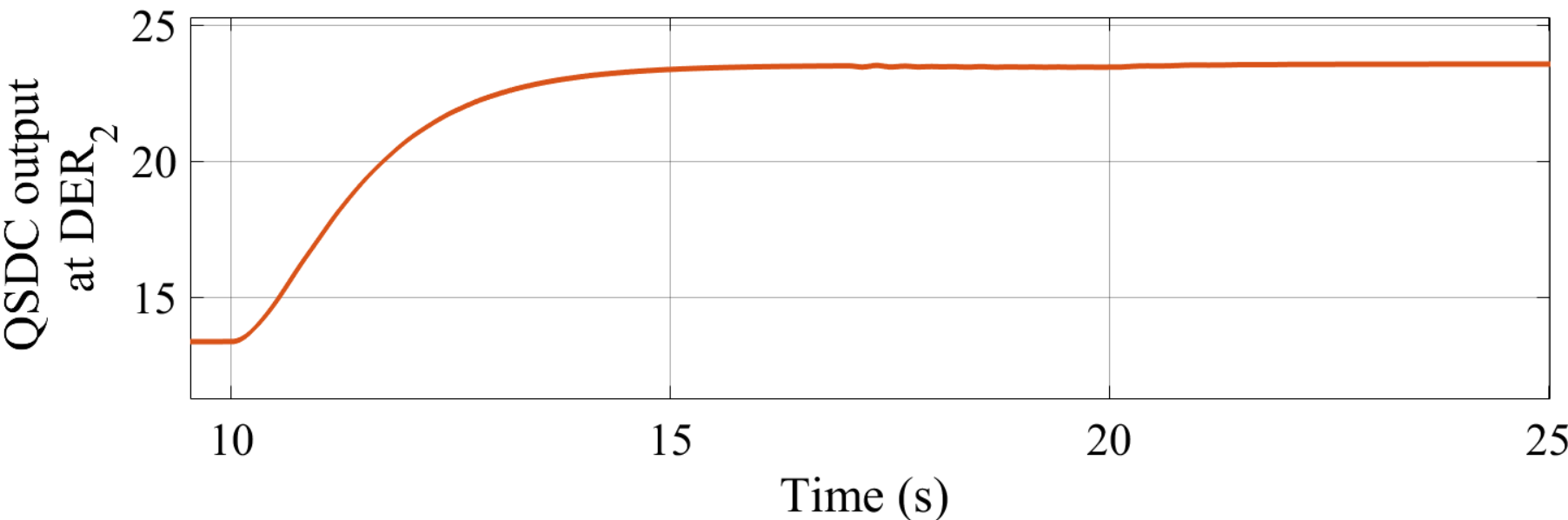}}
\centering \subfigure[]
{\includegraphics[trim=5 0 7 0, clip,width=0.49\textwidth,height=0.195\textwidth]{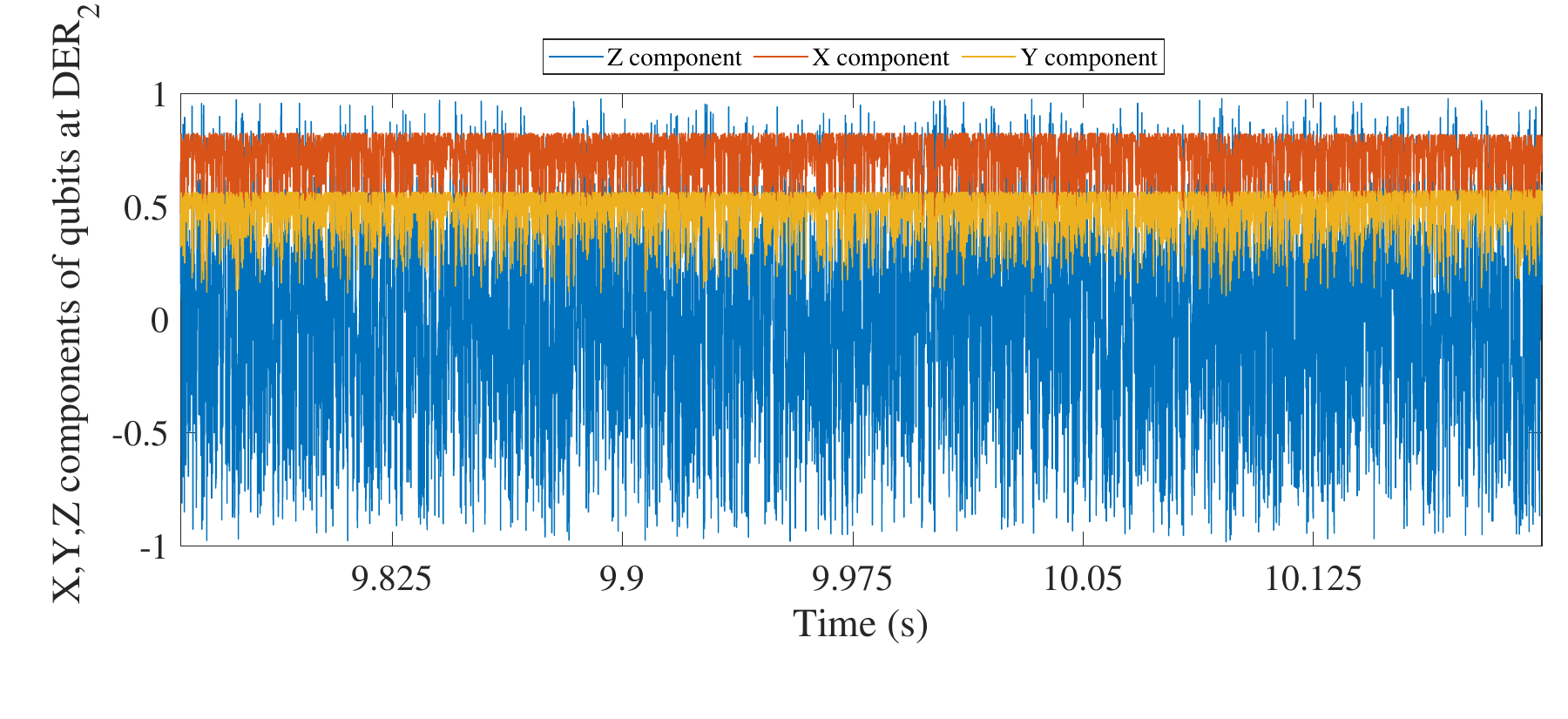}}\vspace{-10pt}
\caption{(a) System's frequency throughout the network after attaching the step load at t = 10s and changing $n_1$ and $n_7$ at t = 17s and t = 20s, respectively. (b) Active power sharing - all $n_iP_i$s converge to the same value. (c) QSDC's output at DER$_2$. (d) x,y and z components of the qubits at DER$_2$.}
\label{s1}
\end{figure*}
\begin{figure}
	\centering
	\includegraphics[trim=37 50 10 35, clip,width=0.5\textwidth, height=0.28\textwidth]{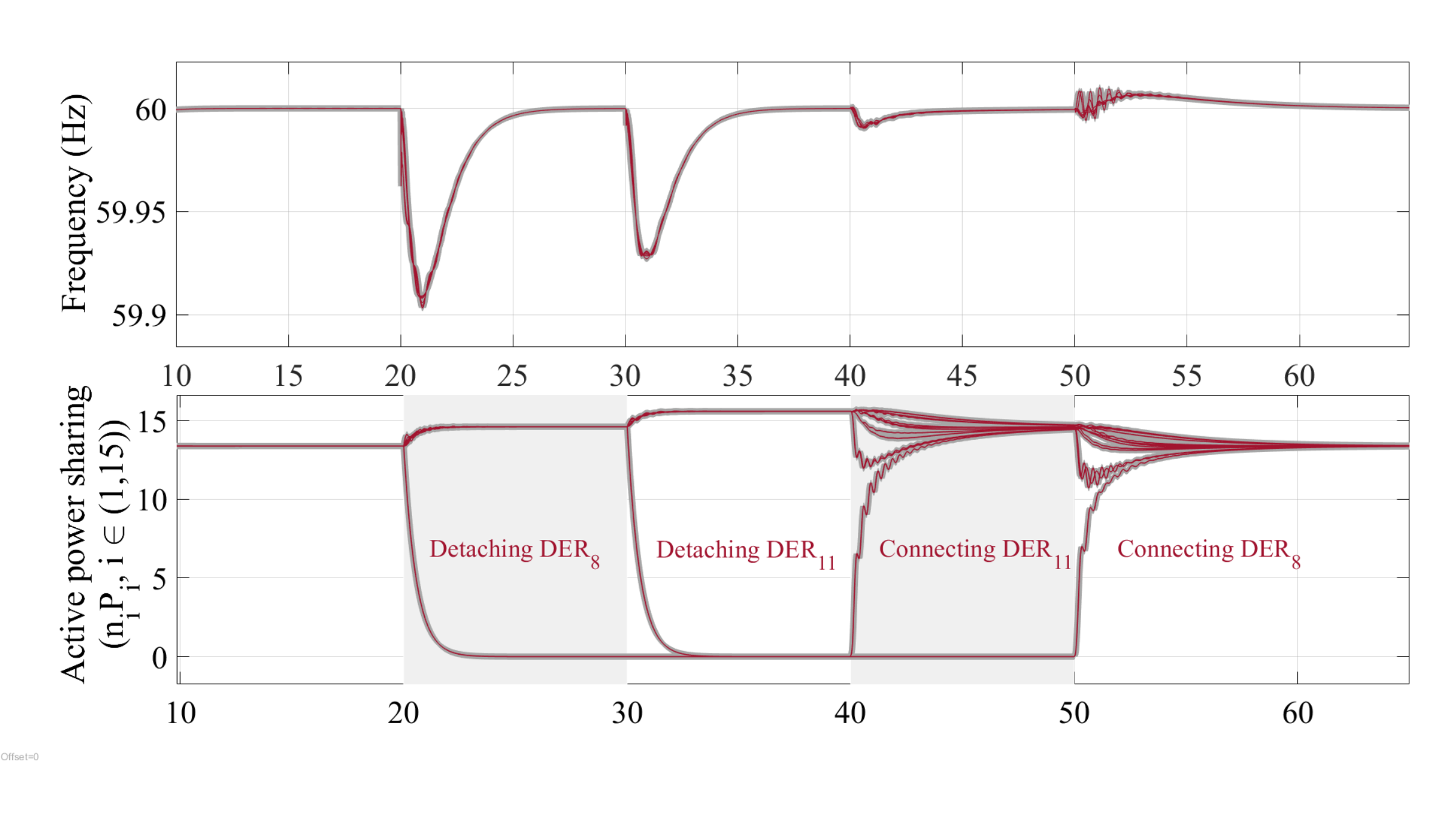}
	\caption{\small Performance of QSDC (gray dash line) and DAPI (red dash line) - Frequency regulation and active power sharing after plug-and-play of DER$_8$ and DER$_{11}$ - At t = 20s and t = 30s the total load is shared among the remaining DERs so their share in power production are increased.}
	\label{fig8}\vspace{-10px}
\end{figure}
\begin{figure}[]
	\centering
	\includegraphics[trim=0 20 0 20, clip,width=0.5\textwidth, height=0.29\textwidth]{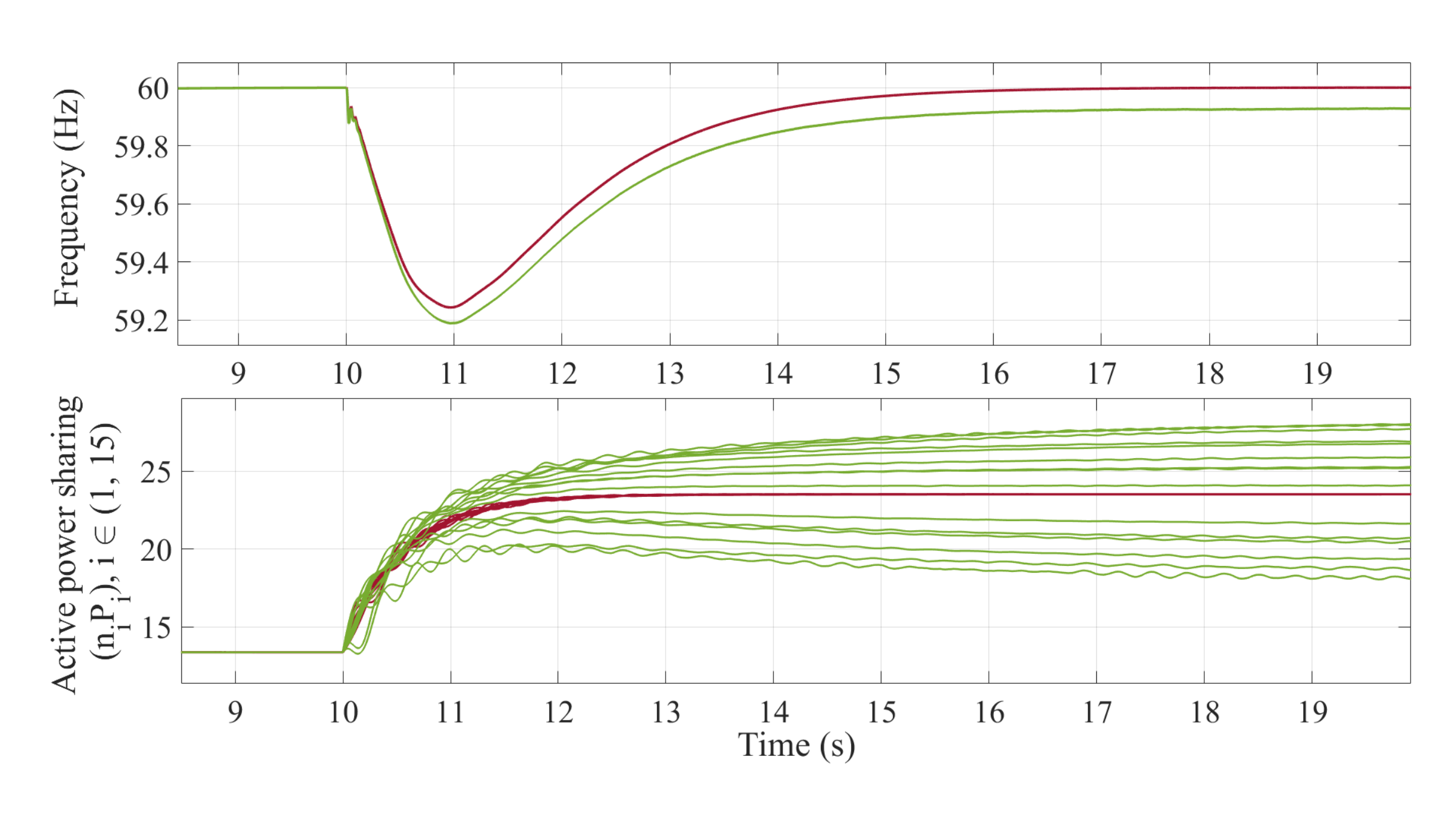}\vspace{-10pt}
	\caption{\small Performance of QSDC (red solid lines) and QDC (green solid lines) under mixed states condition - With QSDC, frequency is regulated to the 60 Hz and all $n_iP_i$ converge to the same value while, with QDC, frequency regulation and power sharing cannot be guaranteed.}
	\label{fig999}
\end{figure}

\noindent
where $\phi_i/k$ is the secondary control variable and the scaled power sharing signal, $kn_iP_i$, is the pinner. The power sharing signal is scaled to be restricted to $(0, \pi/2)$ through selecting $k$ such that $k < \frac{\pi/2}{\textrm{max}(n_iP_i)}$. 

At the steady state, the microgrid is assumed stable. Since the DERs' frequency must be equal, we have $\omega_i=\omega_j$ and thus $n_iP_{i}-\phi_i/k=n_jP_{j}-\phi_j/k \; \; \; \forall i,j$. As shown before, $\phi_i$ converges to the pinner as $t\rightarrow \infty$. Thus, $n_iP_i=\phi_i/k$ and $n_iP_i=n_jP_j \; \; \; \forall i,j$ and $\omega_i$ converges to $\omega^*$.

\begin{itemize}
    \item \textit{Verification on an AC Networked-Microgrid Case Study}
\end{itemize}

The performance of the developed QSDC is tested on a networked microgrids with five AC microgrids each one has 3 DERs (see Fig.~\ref{figM2}). The nominal voltage and frequency are 380 V and 60 Hz respectively. All other parameters can be found in~\cite{9850415}. For the sake of simulation, three scenarios are examined. First, the controller is evaluated in the face of a step load. To verify the controller's feature of plug-and-play capability, as the second scenario, plug-and-play of DERs is tested. Finally, QSDC is tested when some qubits become mixed.

\subsubsection{Controller Performance}
Studies in this section illustrate the performance of the QSDC under a step load change of 40 kW applied to microgrid 2 at t = 10s and results are depicted in Fig.~\ref{s1}. The exploited communication graph is shown in Fig.~\ref{figM2}. Furthermore, to demonstrate the capability of the controller in handling non-constant droop coefficients, droops at DER$_1$ and DER$_7$ change from $5\times10^{-3}$ to $4.7\times10^{-3}$ and from $2\times10^{-3}$ to $2.1\times10^{-3}$
at t = 17s and t = 20s, respectively. 

As can be seen, frequency regulation is maintained throughout the step load change and Active power is accurately shared among the heterogeneous DERs throughout the entire runtime. Also, despite changing the droop coefficients, the convergence point of all the power sharing signals, $n_iP_i$, doesn’t change as it depends on the power imbalance (power deficiency in this case) at the whole system and not the droop coefficients. 

Furthermore, Fig. \ref{s1}d depicts the randomness of the x, y and z components of the qubits at DER$_2$ during a short period around t = 10s.

\subsubsection{Plug-and-play functionality}
 As the plug-and-play scenario, DER$_8$ and DER$_{11}$ are disconnected at t = 20s and t = 30s, respectively. Then DER$_{11}$ is reconnected at t = 40s followed by the reconnection of DER$_8$ at t = 50s. Fig.~\ref{fig8} shows how after disconnection of DERs, frequency is regulated to the rated 60 Hz. After disconnection, the active power is shared among the connected DERs, and then re-shared again among all the DERs after DER$_8$ and DER$_{11}$ are reconnected.
 
 To compare with a classical scheme, Fig. (\ref{fig8}) also shows the performance of the distributed-averaging PI (DAPI) controller \cite{7112129} under the same plug-and-play scenario. As can be seen, other than providing quantum security on the exchanged information among the participating DERs, the QSDC can guarantee precise power sharing and frequency regulation within a reasonable restoring time compared to the classical solution.

\subsubsection{Robustness against mixed states}

Here, we aim to demonstrate that QSDC can handle the problem of mixed states while, under QDC~\cite{9850415}, active power sharing and frequency regulation cannot be attained. For this case, the scenario $1)$ is repeated for when after t = 10s, qubits at DERs 1, 5 ,9 and 11 become mixed after master equation evolution. Fig. (\ref{fig999}) shows that the performance of the QSDC remains intact (red solid lines) while, with QDC (green solid lines), frequency cannot be regulated to the rated 60 Hz and power sharing is not achieved 
due to excessive exhaustion of some 
participating DERs over time.

\subsection{DC Microgrids}

In DC microgrids, droop control function is mainly utilized to provide decentralized power sharing. It generates the voltage reference $V_i^{{\rm ref}}$ as $V_i^{{\rm ref}}=V^*-m_iI_i$~\cite{6894594}, where $V^*$ is the nominal dc voltage, $m_i$ is the current droop gain and $I_i$ is the output current of DER$_i$. Consider the DC microgrid depicted in Fig. (\ref{fig40}), ignoring the inductance effect of 
lines, the DC bus voltage $V_b$ can be determined as $V_b=V_i^{{\rm ref}}-R_iI_i$.

It can be shown that, if the current droop gain $m_i$ is set much larger than the line resistance $R_i$, $\frac{I_i}{I_j} \approx \frac{m_i}{m_j}$ and $V_b  \approx V_i^{ref}$ $\forall i,j$. The larger $m_i$ is chosen, the more accurate power sharing can be obtained; however, larger $m_i$ may cause the dc bus voltage $V_b$ to deviate more from the nominal value $V^*$. Therefore, we aim to attain 
power sharing and voltage restoration simultaneously, by adding the QSDC. To equip the DC microgrid with the QSDC, the droop function is modified as \vspace{-5pt}
\begin{equation}
    \label{59000}
    \begin{aligned}
    &V_i^{ref}=V^*-m_iI_i+\frac{\phi_i}{c},\\
    &\dot \phi_i=\sin{(cm_iI_i-\phi_i)}+\sum_{j=1}^n a_{i,j}\sin{(\phi_j-\phi_i)},
    \end{aligned}
\end{equation}\vspace{-2pt}

\vspace{-5pt}

Again, we select $c$ such that $c < \frac{\pi/2}{\textrm{max}(m_iI_i)}$. Obviously, the first part in the secondary control dynamic is to drive the dc bus voltage $V_b$ to the nominal value $V^*$ while the second part is to guarantee that $\phi_i=\phi_j$ is satisfied, i.e., the current sharing is achieved which demonstrates that the QSDC is also applicable to distributed voltage control in DC microgrids.

\begin{itemize}
    \item \textit{Verification on a DC Microgrid Case Study}
\end{itemize}

To demonstrate the universality of the QSDC, a 9-DER DC microgrid case study~\cite{9850415} is equipped with QSDC (see Fig.~\ref{fig40}). It is assumed that qubits at DERs 2, 3 and 5 become mixed after t = 20s. First, a step load $R_L=3\Omega$ is applied at t = 20s. Results are depicted in Fig.~(\ref{fig9}a). The exploited communication graph is shown in Fig.~(\ref{fig40}). As can be seen, voltage regulation is guaranteed throughout the step load disturbance. Then, to verify the plug-and-play functionality, at t = 28s DER$_2$ is disconnected and attached again at t = 35s. Fig. (\ref{fig9}a) shows how the current (power) is shared among the participating DERs while voltage regulation is preserved along the time. 

To compare with QDC, the same scenario is repeated for when the microgrid is working under QDC and results are demonstrated in Fig. (\ref{fig9}b). As shown, the problem of mixed states prevents the QDC from guaranteeing voltage regulation and current sharing.

\vspace{-10pt}

\section{Conclusion}

Resilience of the current schemes on classical communication makes microgrids vulnerable to cyber attacks. Inspired by quantum properties of quantum bits, in this paper, we devise a scalable quantum-secure distributed controller that can guarantee 
\begin{figure}
	\centering
	\includegraphics[trim=50 5 5 5,clip, width=0.5\textwidth, height=0.65\textwidth]{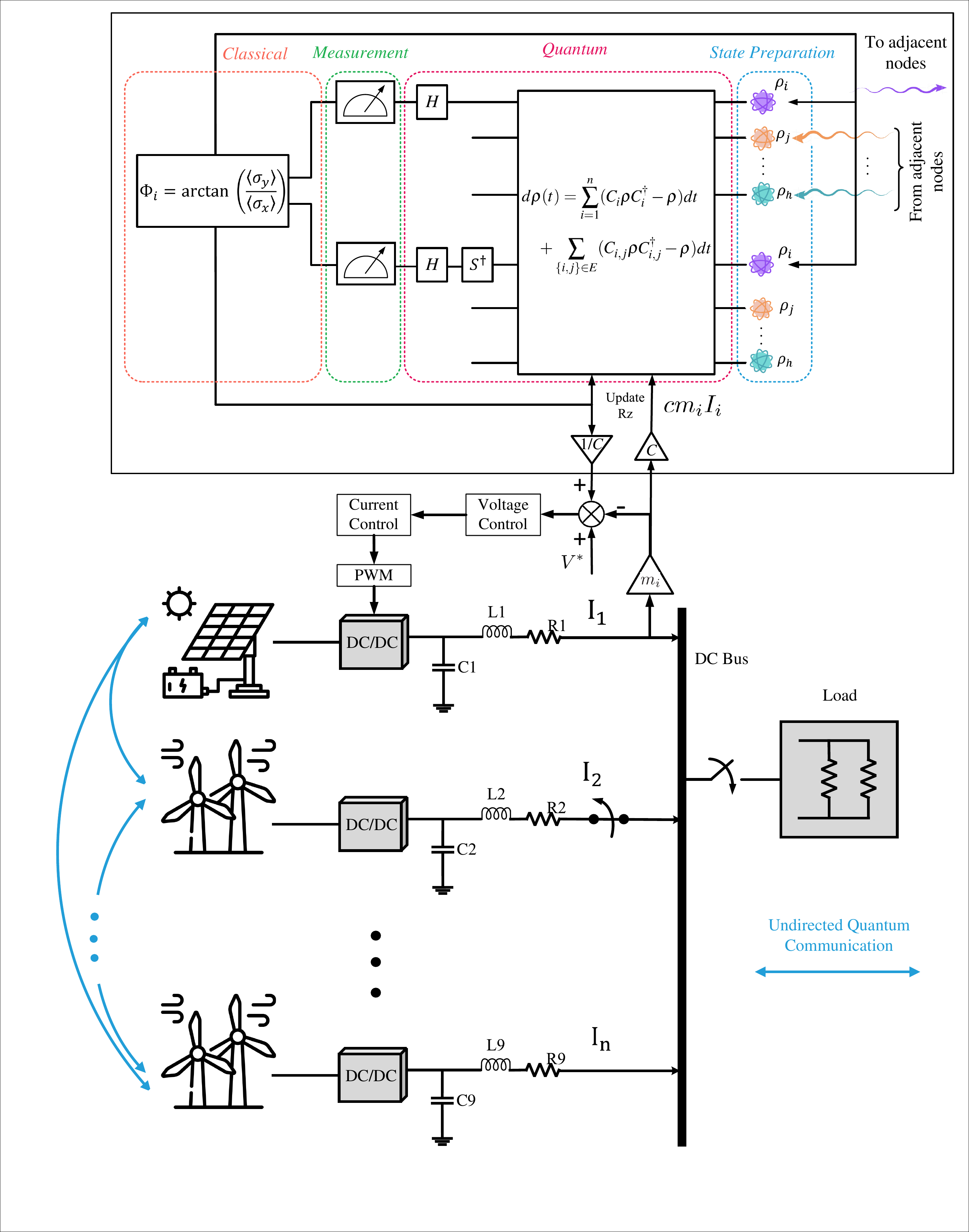} \vspace{-25pt}
	\caption{\small DC microgrid model equipped with the QSDC.}
	\label{fig40}
\end{figure}
\begin{figure}[h]
\centering \subfigure[]
{\includegraphics[trim=10 50 0 5, clip,width=0.49\textwidth, height=0.24\textwidth]{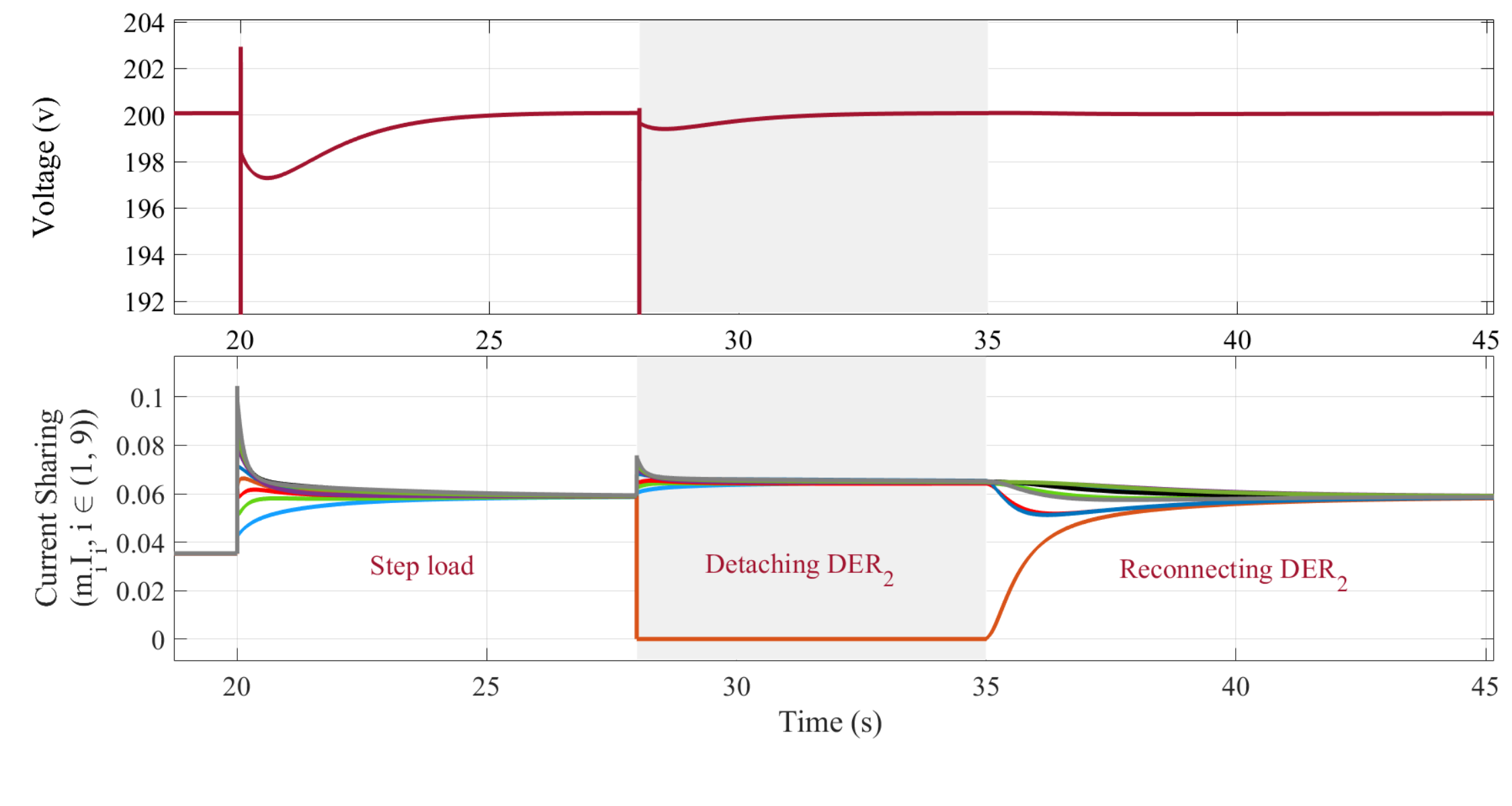}}
\centering \subfigure[]
{\includegraphics[trim=10 30 0 5, clip,width=0.49\textwidth, height=0.24\textwidth]{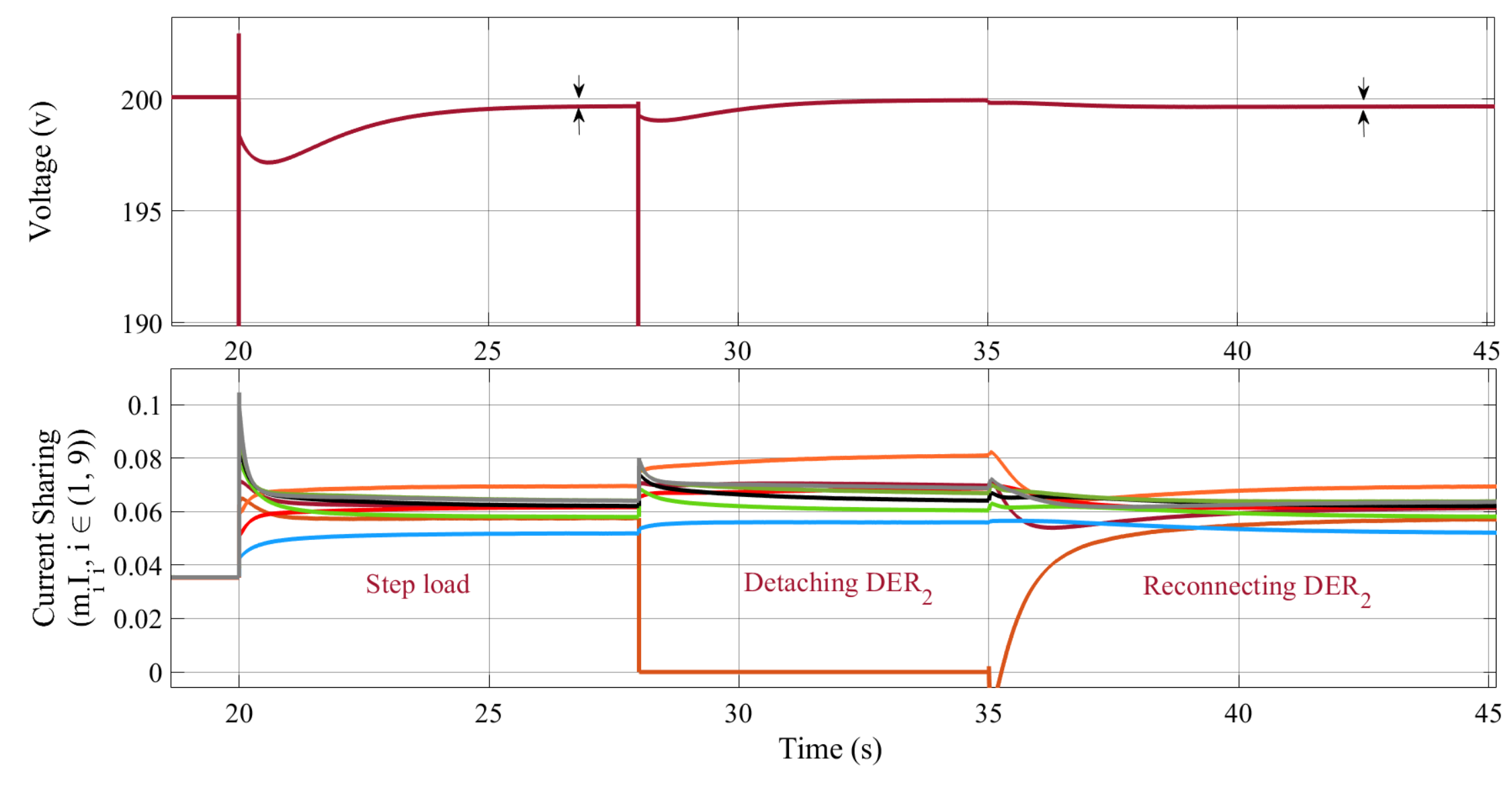}} 
\caption{(a) Voltage regulation and current sharing after a step load disturbance and plug-and play of DER$_2$, under mixed states condition. (b) Performance of QDC under mixed states situation - Voltage regulation and current sharing cannot be guaranteed.}
\label{fig9}
\end{figure}
synchronization and power sharing among participating DERs. We show that in the developed scheme, by allocating random $\theta$ angle to the qubits at the initialization step at each node, the security of the protocol can be unprecedentedly enhanced since, the measurement outcomes of the exchanged qubits, obtained by a third-party agent, would be some random values which do not reveal meaningful information. The new quantum distributed control scheme thus assures provably cyber-resilient microgrids.

\vspace{-10pt}

\appendices

\section{Preliminaries}

\subsubsection{Graph Theory}

A simple graph $G = (V, E)$ consists of a set of $n$ nodes, $V=\{ v_1, v_2, \ldots, v_n\}$, and a set of edges, $E \subset V \times V$. An edge $(v_i,v_j) \subset E$ represents that nodes $v_i$ and $v_j$ can exchange information with each other. 
A sequence of non-repeated edges $(v_i, v_{p_1}), (v_{p_1}, v_{p_2}), \ldots, (v_{p_{m-1}}, v_{p_m}), (v_{p_m},v_j)$ is called a path between nodes $v_i$ and $v_j$. If there exists a path between any two different nodes $v_i,v_j \in V$, $G$ is called connected.
A node $v_j$ is called a neighbor of node $v_i$ if $(v_j, v_i) \subset E$. The set of neighbors of node $v_i$ is denoted as $N_i=\{ v_j \in V\; | \;(v_j, v_i) \subset E\}$. The adjacency matrix of graph $G$, denoted as $A$, is an $n\times n$ matrix whose entries $a_{i,j}=1$ if $v_j \in N_i$ and $a_{i,j}=0$ otherwise. The degree matrix $D$ of graph $G$, denoted as $D$, is defined as an $n\times n$ diagonal matrix whose $i$th diagonal entry equals the degree of node $v_i$, i.e., $\sum_{v_j \in N_i} a_{i,j}$. The Laplacian matrix of graph $G$, denoted as $L$, is defined as $D-A$. Note that $A,D,L$ are all symmetric. The node-edge incidence matrix $B \in R^{V \times E}$ is defined component-wise as $B_{i,j}=1$ if edge $j$ enters node $i$, $B_{i,j}=-1$ if edge $j$ leaves node $i$, and $B_{i,j}=0$ otherwise. For $x \in R^{V}$, $B^Tx \in R^E$ is the vector with components $x_i-x_j$, with $\{i,j\} \in E$. If $\textrm{diag}(\{a_{i,j}\}_{\{i,j\}\in E})$ is the diagonal matrix of edge weights, then the Laplacian matrix is given by $L=B\textrm{diag}(\{a_{i,j}\}_{\{i,j\}\in E})B^T$.

\subsubsection{Quantum Systems and Notations}

The (adjoint) $\dagger$ symbol indicates the transpose-conjugate in matrix representation, and the tensor product $\otimes$ is associated to the Kronecker product. 

The mathematical description of a single quantum system starts by considering a complex Hilbert space $\mathcal{H}$. We utilize Dirac's notation, where $\ket{\psi}$ denotes an element of $\mathcal{H}$, called a ket which is represented by a column vector, while $\bra{\psi}=\ket{\psi}^{\dagger}$ is used for its dual, a bra, represented by a row vector, and $\bra{\psi}\ket{\varphi}$ for the associated inner product. We denote the set of linear operators on $\mathcal{H}$ by $\mathfrak{B}(\mathcal{H})$. The adjoint operator $X^{\dagger} \in \mathfrak{B}(\mathcal{H})$ of an operator $X \in \mathfrak{B}(\mathcal{H})$ is the unique operator that satisfies $(X \ket{\psi})^{\dagger}\ket{\chi}=\bra{\psi}(X^{\dagger}\ket{\chi})$ for all $\ket{\psi}$, $\ket{\chi} \in \mathcal{H}$. The natural inner product in $\mathfrak{B}(\mathcal{H})$ is the Hilbert-Schmidt product $\langle X, Y\rangle=\mathrm{tr}(X^{\dagger}Y)$, where $tr$ is the usual trace functional which is canonically defined in a finite dimensional setting. We denote by $I$ the identity operator. $[A,B]=AB-BA$ is the commutator and $\{ A, B\} = AB + BA$ is the anticommutator of $A$ and $B$.

Qubit, defined as the quantum state of a two-state quantum system, is the smallest unit of information, and it is analogous to classical bit. State of a qubit, represented by $\ket{\psi}=\alpha\ket{0}+\beta\ket{1}$, is superposition of the two orthogonal basis states $\ket{0}\sim[1,\; 0]^T$ and $\ket{1}\sim[0,\; 1]^T$. $\alpha$ and $\beta$ are complex numbers in general, where $\left|\alpha\right|^2 + \left|\beta\right|^2=1$. We denote $\ket{q_1}\otimes...\otimes\ket{q_n} \in \mathcal{H}^{\otimes n}$ as $\ket{q_1...q_n}$ where $\otimes$ represents the tensor product. Since $\left|\alpha\right|^2 + \left|\beta\right|^2=1$, we can represent the state of a qubit by $\ket{\psi}=\cos{\frac{\theta}{2}}\ket{0}+e^{\imath \phi}\sin{\frac{\theta}{2}}\ket{1}$ where $\theta$ and $\phi$ are real numbers defining a point on a unit three-dimensional sphere called Bloch sphere.
 
In the case of mixed state, the state of a quantum system is represented by a $\textit{density operator}$ $\rho$, that is any self-adjoint positive semi-definite operator with trace one, and $\rho = \ket{\psi}\bra{\psi}$ with $\ket{\psi} \in \mathcal{H}$ and $\bra{\psi}\ket{\psi}=1$ are called pure states. For further information on qubits see \cite{nielsen_chuang_2010, preskill}.

\vspace{-3pt}

\section{Proof of Convergence for eq. \eqref{260b}}

For synchronization of the system, it is critical that the pinner for all of the oscillators be the same, i.e., $\phi_{t,i}=\phi^*$, otherwise, synchronization cannot be achieved in general. To study whether quantum node $i$ is synchronized to the pinner, it is convenient to study the phase deviation of quantum node $i$ from the pinner. We introduce the following change of variables,\vspace{-2pt}
\begin{equation}
    \label{eq37}
    \begin{aligned}
    \phi_i=\phi^*+\zeta_i,
    \end{aligned}
\end{equation}
where $\zeta_i$ denotes the phase deviation of the $i$th oscillator from the pinner $\phi^*$. Substituting (\ref{eq37}) into (\ref{eq35}), we have \vspace{-2pt}
\begin{equation}
    \label{eq38}
    \begin{aligned}
    \dot \zeta_i = \sum_{j=1}^n a^{'}_{i,j}\sin{(\zeta_j-\zeta_i)}-\sin{(\zeta_i)}.
    \end{aligned}
\end{equation}
By studying the properties of (\ref{eq38}), we can obtain the condition for synchronization. If all $\zeta_i$'s converge to 0, then we have $\phi_i=\phi^*$ as $t\rightarrow \infty$, indicating that all nodes are synchronized to the pinner. Let $B=[B_{i,j}]_{n\times m}$ be the incidence matrix~\cite{Godsil} of the communication graph $G$ with
$m$ being the number of edges. Then, (\ref{eq38}) can be recast in a state form:
\begin{equation}
    \label{eq40}
    \begin{aligned}
    \dot \zeta = -\sin{\zeta}-BW\sin(B^T\zeta),
    \end{aligned}
\end{equation}
where $W=\textrm{diag}(\{a^{'}_{i,j}\}_{\{i,j\}\in E})$ is the diagonal matrix of edge weights and $\sin (\cdot)$ takes entry-wise operation for a vector. 

To proceed, set $\varepsilon=\max\limits_{1\leq i\leq n}|\zeta_i|$. When $\varepsilon<(\pi/2)$, if $\zeta_i=\varepsilon$, we have $-\pi<-2\varepsilon\leq \zeta_j-\zeta_i\leq 0$ for $1\leq j \leq n$. Hence, in (\ref{eq38}), $\sin{(\zeta_j-\zeta_i)}\leq0$ and $\sin{\zeta_i}>0$ hold, and hence $\dot \zeta_i<0$ hold. Therefore, the vector field is pointing inward in the set, and no trajectory can escape to values larger than $\varepsilon$. Similarly, it can be obtained that, when $\zeta_i=-\varepsilon$, $\dot \zeta_i>0$ holds. Thus no trajectory can escape to values smaller than $-\varepsilon$. Therefore, $\zeta \in [-\varepsilon, \varepsilon]\times \cdots \times [-\varepsilon, \varepsilon]=[-\varepsilon, \varepsilon]^n $ is positively invariant when $\varepsilon<\pi/2$, where $\times$ denotes Cartesian product.
Define a Lyapunov function $V = (1/2) \zeta^T\zeta$, which equals zero only if all $\zeta_i$ are zero, meaning the synchronization of all nodes to the pinner. Differentiating $V$ along the trajectories of (\ref{eq40}) yields
\begin{equation}\label{eq41}
    \dot V 
    =-\zeta^TS_1\zeta-\zeta^TBWS_2B^T\zeta,
\end{equation}
where $S_1\in R^{n\times n}$ and $S_2\in R^{m\times m}$ are given by
\begin{equation}
    \label{eq42}
    \begin{aligned}
    &S_1=\textrm{diag} \left\{ \textrm{sinc}(\zeta_1), \cdots, \textrm{sinc}(\zeta_n) \right\},\\
    &S_2=\textrm{diag} \left\{ \textrm{sinc}(B^T\zeta)_1, \cdots, \textrm{sinc}(B^T\zeta)_m \right\},
    \end{aligned}
\end{equation}
where $\textrm{sinc(x)}\equiv\textrm{sin(x)}/\textrm{x}$ and $(B^T\zeta)_i$ denotes the $i$th element of $m$-dimensional vector $B^T\zeta$. 
When all $\zeta_i$ are within $[-\varepsilon, \varepsilon]$ with $0\leq \varepsilon<(\pi/2)$, $(B^T\zeta)_i$ is in the form of $\zeta_k-\zeta_l \; (1\leq k, l\leq n)$, and hence is restricted to $(-\pi, \pi)$. Given that in $(-\pi, \pi)$, $\textrm{sinc(x)}>0$ holds, it follows that 
$S_1\geq \sigma_1I$ and $S_2\geq \sigma_2I$ with $\sigma_1= \textrm{sinc}(\varepsilon)$ and $\sigma_2=  \textrm{sinc}(2\varepsilon)$.
Thus, $S_1+BWS_2B^T\geq \sigma_1I+\sigma_2BWB^T$, which in combination with (\ref{eq41}) yields 
$\dot V \leq -\zeta^T(\sigma_1I+\sigma_2BWB^T)\zeta$.
Note that $BWB^T$ is the Laplacian matrix of the underlying graph $G$, which is always positive semidefinite. 
Since $\sigma_1$ and $\sigma_2$ are positive,  

$\sigma_1I+\sigma_2BWB^T$ must be positive definite. It follows that, when $0\leq \varepsilon<(\pi/2)$, we have $\dot V\leq -2\mu V$, where 
$\mu =\lambda_{{\rm min}}(\sigma_1I+\sigma_2BWB^T)>0$,
which implies that all the nodes will synchronize to the pinner exponentially fast at a rate no less than $\mu$, which is dependent on the network connectivity.

\bibliographystyle{IEEEtran}


\end{document}